\let\shortversion\a
\let\longversion\x
\pgfplotsset{compat = 1.7}
\renewcommand{\mathbf}[1]{{\bm{#1}}}
\newcommand{\eg}{\emph{e.g.}}
\newcommand{\etal}{\emph{et al.}}
\newcommand{\ie}{\emph{i.e.}}
\newcommand{\Etc}{\emph{Etc.}}
\renewcommand{\leq}{\leqslant}
\renewcommand{\geq}{\geqslant}
\newcommand{\set}[1]{\mathcal{#1}}
\DeclareMathOperator{\Tr}{Tr}
\newcommand{\defeq}{\triangleq}
\newcommand{\setS}{\set{S}}
\newcommand{\setU}{\set{U}}
\newcommand{\setX}{\set{X}}
\newcommand{\setY}{\set{Y}}
\newcommand{\matr}[1]{#1}
\newcommand{\vect}[1]{\mathbf{#1}}
\newcommand{\vs}{\vect{s}}
\newcommand{\tvs}{\tilde{\vect{s}}}
\newcommand{\vu}{\vect{u}}
\newcommand{\hvu}{\hat{\vect{u}}}
\newcommand{\vX}{\vect{X}}
\newcommand{\vx}{\vect{x}}
\newcommand{\cx}{\check{x}}
\newcommand{\cvx}{\check{\vect{x}}}
\newcommand{\vY}{\vect{Y}}
\newcommand{\vy}{\vect{y}}
\newcommand{\cy}{\check{y}}
\newcommand{\cvy}{\check{\vect{y}}}
\newcommand{\muY}{\mu^{\sY}}
\newcommand{\muXY}{\mu^{\sXY}}
\newcommand{\bmuY}{\bar \mu^{\sY}}
\newcommand{\sigmaY}{\sigma^{\sY}}
\newcommand{\sigmaXY}{\sigma^{\sXY}}
\newcommand{\bsigmaY}{\bar \sigma^{\sY}}
\newcommand{\lambdaY}{\lambda^{\sY}}
\newcommand{\HAt}[1]{\mathcal{H}^{\mathrm{A}_{#1}}}
\newcommand{\HBt}[1]{\mathcal{H}^{\mathrm{B}_{#1}}}
\newcommand{\HSt}[1]{\mathcal{H}^{\mathrm{S}_{#1}}}
\newcommand{\DensOp}{\mathcal{D}}
\newcommand{\rhoA}{\rho^{\mathrm{A}}}
\newcommand{\rhoAxell}{\rhoA_{x_{\ell}}}
\newcommand{\rhoAt}[1]{\rho^{\mathrm{A}_{#1}}}
\newcommand{\rhoBt}[1]{\rho^{\mathrm{B}_{#1}}}
\newcommand{\rhoSt}[1]{\rho^{\mathrm{S}_{#1}}}
\newcommand{\rhoAtStp}[1]{\rho^{\mathrm{A}_{#1}\mathrm{S}_{#1-1}}}
\newcommand{\rhoBtSt}[1]{\rho^{\mathrm{B}_{#1}\mathrm{S}_{#1}}}
\newcommand{\sigmaSt}[1]{\sigma^{\mathrm{S}_{#1}}}
\newcommand{\TrBt}[1]{\Tr_{\mathrm{B}_{#1}}}
\newcommand{\TrSt}[1]{\Tr_{\mathrm{S}_{#1}}}
\newcommand{\ISt}[1]{I^{\mathrm{S}_{#1}}}
\newcommand{\sX}{\mathsf{X}}
\newcommand{\sY}{\mathsf{Y}}
\newcommand{\sXY}{\mathsf{X,Y}}
\newcommand{\Herm}{\mathsf{H}}
\newcommand{\pgood}{p_{\mathrm{g}}}
\newcommand{\pbad}{p_{\mathrm{b}}}
\DeclareMathOperator{\tr}{tr}
\newtheorem{Lemma}{Lemma}
\newtheorem{Example}[Lemma]{Example}
\newcounter{mytempeqcounter}
\newcommand{\bigformulatop}[2]{%
  \begin{figure*}[!t]
    \normalsize
    \setcounter{mytempeqcounter}{\value{equation}}
    \setcounter{equation}{#1}
    #2

    \setcounter{equation}{\value{mytempeqcounter}}
    \hrulefill
    \vspace*{4pt}
  \end{figure*}
}
\newcommand{\onestareq}{\overset{\text{(a)}}{=}}
\newcommand{\twostarseq}{\overset{\text{(b)}}{=}}
\newcommand{\onestar}{\text{(a)}}
\newcommand{\twostars}{\text{(b)}}
\declaretheorem[style=plain]{theorem}
\declaretheorem[style=plain,sibling=theorem]{lemma}
\declaretheorem[style=definition,sibling=theorem,
                                 qed=$\blacksquare$]{definition}
\begin{document}

\title{Estimating the Information Rate of a Channel with \\
       Classical Input and Output and a Quantum State}

\author{
    \IEEEauthorblockN{Michael~X.~Cao and Pascal~O.~Vontobel}
    \IEEEauthorblockA{Department of Information Engineering \\
                      The Chinese University of Hong Kong \\
                      \{m.x.cao, pascal.vontobel\}@ieee.org\\[-0.15cm]}
}

\maketitle

\ifx\longversion\x
\watermark{\put(0, 12){\texttt{This is an extended version of a paper 
                         that appears in}}
            \put(0,  0){\texttt{Proc.\ 2017 IEEE International Symposium on 
                          Information Theory, Aachen, Germany.}}}
\fi

\begin{abstract}
  We consider the problem of transmitting classical information over a
  time-invariant channel with memory. A popular class of time-invariant
  channels with memory are finite-state-machine channels, where a
  \emph{classical} state evolves over time and governs the relationship
  between the classical input and the classical output of the channel. For
  such channels, various techniques have been developed for estimating and
  bounding the information rate.

  In this paper we consider a class of time-invariant channels where a
  \emph{quantum} state evolves over time and governs the relation\-ship
  between the classical input and the classical output of the channel. We
  propose algorithms for estimating and bounding the information rate of such
  channels. In particular, we discuss suitable graphical models for doing the
  relevant computations.
\end{abstract}

\section{Introduction}
\label{sec:introduction}

In this section, we first review some results about classical channels, in
particular channels with an evolving classical state. Afterwards, we discuss
channels with an evolving quantum state. Finally, we highlight the
contributions of this paper.

\subsection{Information Rates of Classical Channels}

The information rate of a classical point-to-point channel characterizes the
amount of classical information per channel use that can be transmitted
reliably with the help of this channel. A particularly interesting class of
channels are discrete memoryless channels (DMCs). A DMC is characterized by a
channel input alphabet $\set{X}$, a channel output alphabet $\set{Y}$, and a
channel law $W(y|x)$, where the latter equals the probability of receiving $y$
upon sending $x$. (Here and in the following, we assume that $\set{X}$ and
$\set{Y}$ are finite sets.) As is well known~\cite{Gallager:68}, the
information rate $I(Q,W)$ of a DMC is given by
\begin{align}
  I(Q,W)
    &= \sum_{x \in \set{X}}
       \sum_{y \in \set{Y}}
         Q(x) W(y|x)
         \log
           \left(
             \frac{W(y|x)}{(QW)(y)}
           \right) \ ,
             \label{eq:dmc:information:rate:1}
\end{align}
where $Q$ is some probability mass function (pmf) on $\set{X}$ and $(QW)(y)
\defeq \sum_{x \in \set{X}} Q(x) W(y|x)$. Recall that in order to achieve this
information rate, one needs to design a suitable encoder and a suitable
decoder for some suitably chosen codebook where the distribution of the
entries of the codewords equals $Q$. Because of the simplicity of the
expression in~\eqref{eq:dmc:information:rate:1}, the information $I(Q,W)$ can
be efficiently computed for any given $Q$.\footnote{Note that even the
  maximization of $I(Q,W)$ over all pmfs over $\set{X}$ can be done
  efficiently~\cite{Arimoto:72, Blahut:72}. The maximal information rate is
  known as the capacity of the DMC and the maximizing $Q$ is known as the
  capacity-achieving input distribution.}

\begin{Example}
  For any $0 \leq p \leq 1$, the binary symmetric channel with cross-over
  probability $p$, henceforth called BSC($p$), is a DMC with $\set{X}
  \defeq \{ 0, 1 \}$, $\set{Y} = \{ 0, 1 \}$, $W(0|0) = 1 - p$, $W(1|0) =
  p$, $W(0|1) = p$, and $W(1|1) = 1 - p$. If $Q(0) = Q(1) = 1/2$,
  then its information rate is $I(Q,W) = 1 - h_2(p)$ bits per channel use,
  where $h_2$ is the binary entropy function.
\end{Example}

We proceed to channels with memory, in particular to stationary ergodic
channels with input alphabet $\set{X}$ and output alphabet $\set{Y}$. Let $W$
denote the channel law of such a channel. Under suitable
conditions~\cite{Gallager:68}, the information rate is given by $I(Q,W) =
\lim_{n \to \infty} \frac{1}{n} I(X_1, X_2, \ldots, X_n; Y_1, Y_2, \ldots
Y_n)$, where $\sX \defeq (X_1, X_2, \ldots)$ is the channel input process
characterized by some stationary ergodic law $Q$, and where $\sY \defeq (Y_1,
Y_2, \ldots)$ is the channel output process.

For such channels, computing the information rate, let alone the capacity, is
much more challenging than for DMCs. Namely, except for very special cases,
there are no single-letter or other simple expressions for information rates
available, and so, most of the time, one needs to rely on upper and lower
bounds and/or on stochastic techniques for estimating the information rate.

Notably, in the case of finite-state-machine channels
(FSMCs)~\cite{Gallager:68}, \ie, channels with a finite classical state,
efficient stochastic techniques have been developed for estimating the
information rate~\cite{Arnold:Loeliger:Vontobel:Kavcic:Zeng:06:1,
  Sharma:Singh:01:1, Pfister:Soriaga:Siegel:01:1}. (For these techniques,
under mild conditions, the numerical estimate of the information rate
converges with probability one to the true value when the length of the
channel input sequence goes to infinity.) However, even for FSMCs, maximizing
the information rate is much more challenging than maximizing the information
rate of DMCs~\cite{Vontobel:Kavcic:Arnold:Loeliger:08:1}.

\begin{Example}
  A notable example of an FSMC is the Gilbert--Elliott
  channel~\cite{Mushkin:BarDavid:89:1}, which can be either in the so-called
  ``good'' state or in the so-called ``bad'' state. If the channel is in the
  ``good'' state, then it behaves like a BSC($\pgood$), but if the channel is
  in the ``bad'' state, then it behaves like a BSC($\pbad$), where usually
  $\bigl| \pbad \! - \! \frac{1}{2} \bigr| < \bigl| \pgood \! - \!
  \frac{1}{2} \bigr|$. The state process itself is a first-order stationary
  ergodic Markov process which is independent of the input
  process.\footnote{The independence of the state process on the input process
    is a particular feature of the Gilbert--Elliott channel. In general, the
    state process of a finite-state channel can depend on the input process.}
  (For more details, see, \eg, the discussions
  in~\cite{Vontobel:Kavcic:Arnold:Loeliger:08:1,
    Sadeghi:Vontobel:Shams:09:1}.)
\end{Example}

For FSMCs with large state spaces, the above-mentioned information rate
estimation techniques can be time-consuming and so stochastic techniques to
estimate upper and lower bounds have proven
useful~\cite{Arnold:Loeliger:Vontobel:Kavcic:Zeng:06:1,
  Sadeghi:Vontobel:Shams:09:1}. These bounding techniques are based on a
so-called auxiliary FSMC, which is a low-complexity approximation of the true
FSMC. Interestingly enough, the lower bounds represent achievable rates under
mismatched decoding, where the decoder bases its computations not on the true
FSMC but on the auxiliary FSMC~\cite{Ganti:Lapidoth:Telatar:00:1}. (See the
paper~\cite{Sadeghi:Vontobel:Shams:09:1} for a more detailed discussion of
this topic and for further references.)

\subsection{Information Rates of Channels with a Quantum State
                      --- Paper Overview}
\label{sec:information:rate:channels:with:quantum:state:1}

In this paper we consider the problem of transmitting classical information
over a channel with an evolving \emph{quantum} state. A particular instance of
such a channel is as follows:
\begin{itemize}

\item The state is given by some quantum system, called the state quantum
  system, whose position in space does not change and which, if left by
  itself, evolves according to some Hamiltonian $\mathcal{H}_{\mathrm{s}}$.

\item Alice wants to transmit some classical information to Bob. To this end,
  she uses a classical code to encode her information word $\vu = (u_1, u_2,
  \ldots, u_k) \in \setU^k$ into a codeword $\vx = (x_1, x_2, \ldots, x_n) \in
  \setX^n$.

\item At time instance $\ell$, Alice encodes $x_{\ell} \in \setX$ as a
  particular state of some quantum system, called the $\ell$-th transmit
  quantum system, which she sends to Bob.

\item On the way to Bob, the $\ell$-th transmit quantum system interacts with
  the state quantum system.

\item Bob receives the $\ell$-th transmit quantum system and performs a
  quantum measurement resulting in some value $y_{\ell} \in \setY$.

\item After receiving $\vy = (y_1, y_2, \ldots, y_n) \in \setY^n$, Bob decodes
  $\vy$ toward obtaining an estimate $\hvu$ of $\vu$.

\end{itemize}
This setup is vaguely inspired by the setup in Fig.~4 of
\cite{Haroche:Wineland:Nobelprize:2012:1}. Note that the setup therein was
\emph{not} for data communication, but for manipulating and measuring what we
call here the state quantum system.

In this paper, we discuss algorithms for estimating and lower bounding the
information rate of such channels with an evolving quantum state (see
Section~\ref{sec:information:rate:estimation:1}). Toward this end, we
introduce suitable graphical models for visualizing and doing the relevant
computations (see Section~\ref{sec:graphical:models:1}). Finally, we present
some numerical results (see Section~\ref{sec:numerical:examples:1}).

\subsection{References with Background Information}
\label{sec:background:1}

\ifx\shortversion\x
Due to space constraints, we limit ourselves to conveying the main ideas
behind our results. (The details can be found in~\cite{Cao:Vontobel:17:2}.)
\fi

In the following, we assume that the reader is familiar with the very basics
of quantum information processing (see, \eg, the excellent book Nielsen and
Chuang~\cite{Nielsen:Chuang:00:1} for an introduction).  For a general
introduction to quantum channels with memory, we refer to the survey papers by
Kretschmann and Werner~\cite{Kretschmann:Werner:05:1} and by Caruso
\etal~\cite{Caruso:Giovannetti:Lupo:Mancini:14:1}.

Moreover, some familiarity with graphical models (like factor
graphs)~\cite{Kschischang:Frey:Loeliger:01, Forney:01:1, Loeliger:04:1} and
with techniques for estimating the information rate of a classical FSMC as
presented in~\cite{Arnold:Loeliger:Vontobel:Kavcic:Zeng:06:1,
  Sadeghi:Vontobel:Shams:09:1} will be beneficial. Recall that graphical
models are a popular approach for representing multivariate functions with
non-trivial factorizations and for doing computations like
marginalization~\cite{Kschischang:Frey:Loeliger:01, Forney:01:1,
  Loeliger:04:1}. In particular, graphical models can be used to represent
joint probability mass functions (pmfs) / probability density functions
(pdfs). In the present paper we will heavily rely on the
papers~\cite{Loeliger:Vontobel:12:1, Loeliger:Vontobel:15:1:subm}, which
discussed an approach for using normal factor graphs (NFGs) for representing
functions that typically appear when doing computations w.r.t.\ some quantum
systems. Probabilities of interest are then obtained by suitably applying the
sum-product algorithm / applying the closing-the-box operation.

\section{Channels with Classical or Quantum States 
               and their Graphical Models}
\label{sec:graphical:models:1}

We first review NFGs that were used
in~\cite{Arnold:Loeliger:Vontobel:Kavcic:Zeng:06:1} in the context of
estimating the information rate of channels with an evolving classical
state. Afterwards, we will show NFGs that we can use for estimating the
information rate of channels with an evolving quantum state.

\subsection{Channels with a Classical State}
\label{sec:channel:with:classical:state:1}

Fig.~\ref{fig:FMSC:high:level:1} shows the NFG that was used
in~\cite{Arnold:Loeliger:Vontobel:Kavcic:Zeng:06:1} in the context of
estimating the information rate of channels with a classical state. Let
$g(\vx,\vy,\tvs)$ denote the global function of this NFG (\ie, the
multivariate function represented by this NFG), where $\vx \defeq (x_1,
\ldots, x_n)$, $\vy \defeq (y_1, \ldots, y_n)$, and $\tvs \defeq (\tilde s_0,
\tilde s_1, \ldots, \tilde s_n)$. Some comments:
\begin{itemize}

\item The part of the NFG inside the blue box represents the input process
  $Q(\vx)$. Here, for simplicity, the input process is an i.i.d.\ process
  characterized by the pmf $p_X$, \ie, $Q(\vx) = \prod_{\ell=1}^{n}
  p_X(x_{\ell})$.

\item The part of the NFG inside the red box represents $W(\vy, \tvs | \vx)$,
  \ie, the probability of $\vy$ and $\tvs$ given $\vx$. After applying the
  closing-the-box operation, \ie, after summing over all the variables
  associated with edges completely inside the red box, we obtain the channel
  law $W(\vy | \vx) \defeq \sum_{\tvs} W(\vy, \tvs | \vx)$.

\item The function $W(\vy, \tvs | \vx)$ decomposes as $W(\vy, \tvs | \vx) =
  p_{\tilde S_0}(\tilde s_0) \cdot \prod_{\ell=1}^{n} W(\tilde s_{\ell},
  y_{\ell} | \tilde s_{\ell-1}, x_{\ell})$. Here, $p_{\tilde S_0}$ is a pmf
  and $W(\tilde s_{\ell}, y_{\ell} | \tilde s_{\ell-1}, x_{\ell})$ is assumed
  to be a conditional pmf:
  \begin{align}
    \forall \ \tilde s_{\ell}, y_{\ell}, \tilde s_{\ell-1}, x_{\ell}: & \ \ 
      W(\tilde s_{\ell}, y_{\ell} | \tilde s_{\ell-1}, x_{\ell}) \geq 0 \ ,
     \label{eq:classical:W:condition:1} \\
    \forall \ \tilde s_{\ell-1}, x_{\ell}: & \ \ 
      \sum_{\tilde s_{\ell}, \, y_{\ell}}
        W(\tilde s_{\ell}, y_{\ell} | \tilde s_{\ell-1}, x_{\ell}) = 1 \ .
      \label{eq:classical:W:condition:2}
  \end{align}

\end{itemize}
With this, one can verify that the NFG in Fig.~\ref{fig:FMSC:high:level:1} has
the following properties. (Most of these properties are in contrast to the
properties of the upcoming NFG that we will use for channels with a quantum
state.)
\begin{itemize}

\item The global function $g(\vx, \vy, \tvs)$ is a pmf over $\vx$, $\vy$, and
  $\tvs$.

\item The function $g(\vx, \vy) \defeq \sum_{\tvs} g(\vx, \vy,\tvs)$, which is
  obtained by summing the global function over $\tvs$, represents the
  corresponding marginal pmf over $\vx$ and $\vy$. The function $g(\tvs)
  \defeq \sum_{\vx, \, \vy} g(\vx, \vy,\tvs)$, which is obtained by summing
  the global function over $\vx$ and $\vy$, represents the corresponding
  marginal pmf over $\tvs$. \Etc

\end{itemize}

\ifx\longversion\x

\noindent
\begin{center}
  \textcolor{black}{--- See Appendix~\ref{sec:app:1} for additional
    comments. ---}
\end{center}

\fi

\subsection{Channels with a Quantum State}
\label{sec:channel:with:quantum:state:1}

We now turn our attention to channels with an evolving quantum state. In this
case, it is in general \emph{not} possible to come up with an NFG that has a
``nice'' factorization and that has the properties listed at the end of
Section~\ref{sec:channel:with:classical:state:1}. However, note that we
``only'' need an NFG with a global function $g(\vx, \vy, \ldots)$ which has
the property that if we sum over all variables except $\vx$ and $\vy$, then we
obtain a pmf over $\vx$ and $\vy$. In particular, we do \emph{not} need
$g(\vx, \vy, \ldots)$ to have the property that if we sum over $\vx$ and $\vy$
then the resulting function is a pmf over the remaining variables.

Consider an NFG with global function $g(\vx,\vy,\vs,\vs')$, where $\vx \defeq
(x_1, \ldots, x_n)$, $\vy \defeq (y_1, \ldots, y_n)$, $\vs \defeq (s_0, s_1,
\ldots, s_n)$, and $\vs' \defeq (s'_0, s'_1, \ldots, s'_n)$. Define
$g(\vx,\vy) \defeq \sum_{\vs, \, \vs'} g(\vx,\vy,\vs,\vs')$. The
above-mentioned conditions mean that $g(\vx,\vy)$ must be a pmf over $\vx$ and
$\vy$, but $g(\vx,\vy,\vs,\vs')$ need \emph{not} be a pmf over $\vx$, $\vy$,
$\vs$, and $\vs'$. In particular, $g(\vs,\vs') \defeq \sum_{\vx, \, \vy}
g(\vx,\vy,\vs,\vs')$ need \emph{not} be a pmf over $\vs$ and $\vs'$.

As it happens to be, considering NFGs whose global function
$g(\vx,\vy,\vs,\vs')$ satisfies
\begin{align}
  \forall \ \vx, \vy, \vs, \vs' : & \ \  
    g(\vx,\vy,\vs,\vs') \in \mathbb{C} \ , 
      \label{eq:quantum:channel:nfg:property:1} \\
  \forall \ \vx, \vy, \vs, \vs' : & \ \ 
    g(\vx,\vy,\vs,\vs') = \overline{g(\vx,\vy,\vs',\vs)} \ , 
      \label{eq:quantum:channel:nfg:property:2} \\
                                  & \ \ 
    \sum_{\vx, \, \vy, \, \vs, \, \vs'} 
      g(\vx,\vy,\vs,\vs') = 1 \ , 
      \label{eq:quantum:channel:nfg:property:3} \\
  \forall \ \vx, \vy : & \ \ 
    g(\vx,\vy) \in \mathbb{R}_{\geq 0} \ ,
      \label{eq:quantum:channel:nfg:property:4} \\
                       & \ \
    \sum_{\vx, \, \vy} 
      g(\vx,\vy) = 1 \ , 
      \label{eq:quantum:channel:nfg:property:5} 
\end{align}
is general enough in order to capture quantum phenomena and to represent the
associated computations with the help of NFGs that have a ``nice''
factorization~\cite{Loeliger:Vontobel:15:1:subm}.\footnote{The over-line
  in~\eqref{eq:quantum:channel:nfg:property:2} denotes complex
  conjugation. Note that condition~\eqref{eq:quantum:channel:nfg:property:5}
  is redundant given condition~\eqref{eq:quantum:channel:nfg:property:3}, but
  we display it because of its importance.}

With suitably chosen local function nodes, the NFG in
Fig.~\ref{fig:QFMSC:high:level:1} is an NFG that
satisfies~\eqref{eq:quantum:channel:nfg:property:1}--\eqref{eq:quantum:channel:nfg:property:5}.
Specifically, it suffices to impose the following requirements on the local
function nodes:
\begin{itemize}

\item The input process is an i.i.d.\ process characterized by the pmf
  $p_X$. (This is for simplicity only; more complicated processes could be
  used.)

\item In order to show the constraints on the function $W$, it is beneficial
  to write its arguments as follows: $W^{(y_{\ell}|x_{\ell})}(s_{\ell-1},
  s_{\ell}; s'_{\ell-1}, s'_{\ell})$. Moreover, for any fixed $x_{\ell}$ and
  $y_{\ell}$, we denote by $\matr{W}^{(y_{\ell}|x_{\ell})}$ the matrix with
  row labels $(s_{\ell-1}, s_{\ell})$, with column labels $(s'_{\ell-1},
  s'_{\ell})$, and with entries $W^{(y_{\ell}|x_{\ell})}(s_{\ell-1},
  s_{\ell}; s'_{\ell-1}, s'_{\ell})$. With this, $W$ has to satisfy
  \begin{align}
    &
    \hspace{-0.2cm}
    \forall \ x_{\ell}, y_{\ell} : \ \ 
    \matr{W}^{(y_{\ell}|x_{\ell})} 
      \ \text{is a p.s.d. matrix (over $\mathbb{C}$)} \ ,
              \label{eq:quantum:channel:nfg:property:11}  \\
    &
    \hspace{-0.2cm}
    \forall \ x_{\ell}, s_{\ell-1}, s'_{\ell-1} \! : \!\!\!\!\!
    \sum_{s_{\ell}, \, s'_{\ell}, \, y_{\ell}} \!\!\!\!\!
      W^{(y_{\ell}|x_{\ell})}(s_{\ell-1}, s_{\ell}; s'_{\ell-1}, s'_{\ell})
      \! \cdot \!
      \delta(s'_{\ell},s_{\ell}) \nonumber \\[-0.25cm]
    &\hspace{4.5cm}
       = \delta(s'_{\ell-1},s_{\ell-1}) \ ,
                 \label{eq:quantum:channel:nfg:property:12} 
  \end{align}
  where p.s.d.\ stands for positive semi-definite and where $\delta$ is the
  Kronecker-delta function. Note that
  condition~\eqref{eq:quantum:channel:nfg:property:12} can be visualized as
  shown in Fig.~\ref{fig:quantum:channel:condition:1}, where applying a
  closing-the-box operation~\cite{Loeliger:Vontobel:15:1:subm} to the NFG on
  the left-hand side results in the NFG on the right-hand side. (On the side,
  we note that with the above ordering of the entries, for every $x_{\ell}$
  the matrix $\sum_{y_{\ell}} W^{(y_{\ell}|x_{\ell})}$ is known to be in
  Choi-matrix-representation form or in dynamical-matrix-representation
  form~\cite{Wood:Biamonte:Cory:15:1}.)

\item The initial quantum state $(s_0,s'_0)$ is characterized by the
  complex-valued function $\rhoSt{0}$, which, when written as a matrix, is
  p.s.d.\ (over $\mathbb{C}$) and has trace one.

\end{itemize}
One can verify that these constraints on the local functions of the NFG in
Fig.~\ref{fig:QFMSC:high:level:1} lead to a global function $g(\vx, \vy, \vs,
\vs')$ which
satisfies~\eqref{eq:quantum:channel:nfg:property:1}--\eqref{eq:quantum:channel:nfg:property:5}.

\ifx\longversion\x

\noindent
\begin{center}
  \textcolor{black}{--- See Appendix~\ref{sec:app:2} for additional
    comments. ---}
\end{center}

\fi

\begin{Example}
  \label{example:QGEChannel:1}

  As a particular example of a channel with a quantum state, we propose a
  quantum version of the classical Gilbert--Elliott channel, henceforth called
  the Quantum Gilbert--Elliott channel. We define this channel by specifying
  the NFG as in Fig.~\ref{fig:quantum:GE:channel:1}, which, upon
  closing-the-box results in a function node that can be used as
  $W^{(y_{\ell}|x_{\ell})}(s_{\ell-1}, s_{\ell}; s'_{\ell-1}, s'_{\ell})$ in
  Fig.~\ref{fig:QFMSC:high:level:1}. The NFG in
  Fig.~\ref{fig:quantum:GE:channel:1} stems from the following considerations.
  (Recall the communication setup from
  Section~\ref{sec:information:rate:channels:with:quantum:state:1}.)
  \begin{itemize}
    
  \item $\setX \defeq \{ 0, 1 \}$, $\setY \defeq \{ 0, 1 \}$, $\setS \defeq
    \setS' \defeq \{ 0, 1 \}$.

  \item The state quantum system is some qubit.

  \item The $\ell$-th transmit quantum system is some qubit.

  \item At time index $\ell$, Alice encodes $x_{\ell}$ into state $\rhoAxell$
    of the $\ell$-th transmit quantum system, where the matrix version of
    $\rhoAxell$ is a p.s.d.\ matrix with trace one. Specifically, for the
    communication setups in Section~\ref{sec:numerical:examples:1} we choose
    $\rhoA_0 = \bigl( \begin{smallmatrix} 1 & 0 \\ 0 & 0 \end{smallmatrix}
    \bigr)$ and $\rhoA_1 = \bigl( \begin{smallmatrix} 0 & 0 \\ 0 &
      1 \end{smallmatrix} \bigr)$.

  \item Alice sends the $\ell$-th transmit quantum system to Bob. On its way
    it interacts with the state quantum system. This interaction is described
    in terms of an operator-sum representation
    \cite[Chap.~8]{Nielsen:Chuang:00:1} based on matrices $E_{k_{\ell}}$,
    $k_{\ell} \in \{ 0, 1 \}$, where
    \begin{align*}
      E_0 
        &\! \defeq \! 
           \left[
           \begin{smallmatrix}
             \!\!\sqrt{1\!-\!\pgood}\!\! & 0                & 0   & 0 \\
             0                & \!\!\sqrt{1\!-\!\pgood}\!\! & 0   & 0 \\
             0                & 0                & \!\!\sqrt{1\!-\!\pbad}\!\! & 0 \\
             0                & 0                & 0   & \!\!\sqrt{1\!-\!\pbad}\!\!
           \end{smallmatrix}
           \right] \!\! , \  
      E_1 
         \! \defeq \! 
           \left[
           \begin{smallmatrix}
             0                  & \sqrt{\pgood} & 0     & 0 \\
             \sqrt{\pgood} & 0            & 0           & 0 \\
             0                  & 0            & 0   & \sqrt{\pbad} \\
             0                  & 0            & \sqrt{\pbad} & 0
           \end{smallmatrix}
           \right] \!\! .
    \end{align*}

  \item Bob performs a quantum measurement~\cite[Chap.~2]{Nielsen:Chuang:00:1}
    defined by measurement operators $\{ M_{y_{\ell}} \}_{y_{\ell} \in
      \setY}$ on the $\ell$-th transmit quantum system. Specifically, for the
    communication setups in Section~\ref{sec:numerical:examples:1} we choose
    $M_0 = \bigl( \begin{smallmatrix} 1 & 0 \\ 0 & 0 \end{smallmatrix} \bigr)$
    and $M_1 = \bigl( \begin{smallmatrix} 0 & 0 \\ 0 & 1 \end{smallmatrix}
    \bigr)$.

  \item Between two transmissions, the evolution of the state quantum system
    is described by a unitary matrix $U$ that is derived from the Hamiltonian
    $\mathcal{H}_{\mathrm{s}}$ and the time difference between two
    transmissions.

  \end{itemize}
  Note that for function nodes in the NFG in
  Fig.~\ref{fig:quantum:GE:channel:1} that were specified in terms of a
  matrix, we use a dot to denote the variable that corresponds to the row
  index of the matrix. (In the case of the function nodes $E_{k_{\ell}}$ and
  $E_{k_{\ell}}^{\Herm}$, two variables jointly correspond to the row index of
  the matrix.)

  \ifx\longversion\x

  \noindent
  \begin{center}
    \textcolor{black}{--- See Appendix~\ref{sec:app:3} for additional
      comments. ---}
  \end{center}

  \fi

\end{Example}

\newpage

\begin{figure}[!h]
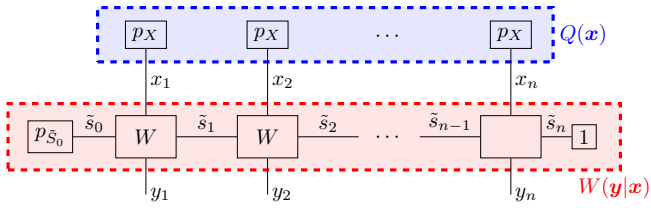

  \centering
  \figCFSMChighlevel
  \caption{Channel with a classical state: joint NFG for input process $Q(\vx)$
    (blue box) and channel law $W(\vy|\vx)$ (red box), after closing-the-box.}
  \label{fig:FMSC:high:level:1}
\end{figure}

\begin{figure}[!h]
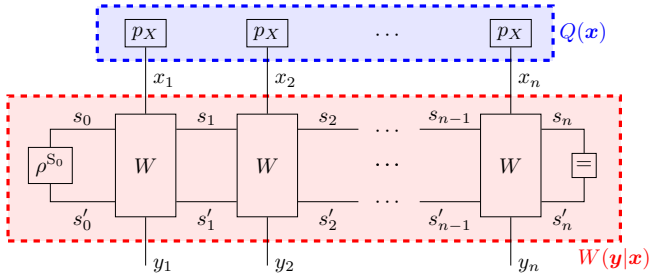

  \centering
  \figQFSMChighlevel
  \caption{Channel with a quantum state: joint NFG for input process $Q(\vx)$
    (blue box) and channel law $W(\vy|\vx)$ (red box), after closing-the-box.}
  \label{fig:QFMSC:high:level:1}
\end{figure}

\begin{figure}[!h]
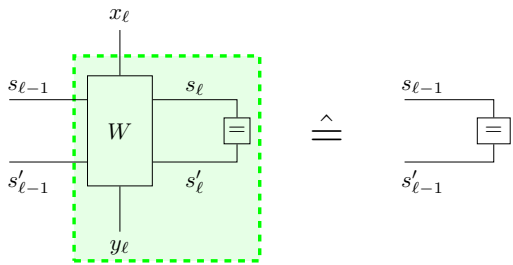

  \centering
  \figQFSMCcondition
  \caption{NFGs visualizing the
    constraint~\eqref{eq:quantum:channel:nfg:property:12} that $W$ has to
    satisfy. Namely, after applying the closing-the-box operation to the NFG
    on the left-hand side, one has to obtain the NFG on the right-hand
    side. (Note that on the right-hand side, the edge corresponding to
    $x_{\ell}$ has been omitted because it is not connected to any function
    node.)}
  \label{fig:quantum:channel:condition:1}
\end{figure}

\begin{figure}[!h]
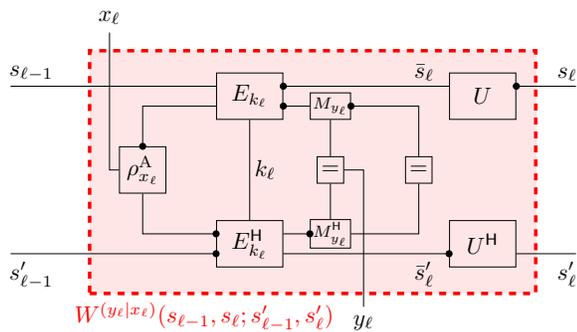

  \centering
  \figQGEchannel
  \caption{Internal details of $W^{(y_{\ell}|x_{\ell})}(s_{\ell-1}, s_{\ell}; 
    s'_{\ell-1}, s'_{\ell})$ for the Quantum Gilbert--Elliott Channel.}
  \label{fig:quantum:GE:channel:1}
\end{figure}

\newpage

\section{Information Rate Estimation}
\label{sec:information:rate:estimation:1}

Recall that the approach of~\cite{Arnold:Loeliger:Vontobel:Kavcic:Zeng:06:1}
for estimating information rates of FSMCs is based on the
Shannon--McMillan--Breiman theorem (see, \eg, \cite{Cover:Thomas:91}) and
suitable generalizations. Namely, the
information rate
\begin{align*}
  I(\sX;\sY) 
    &= H(\sX) + H(\sY) - H(\sXY) \\[-0.60cm]
\end{align*}
of a channel with a classical state can be estimated as follows:
\begin{enumerate}

\item Randomly generate a channel input sequence $\cvx = (\cx_1, \cx_2,
  \ldots, \cx_n)$ according to the law $Q$.

\item Based on this channel input sequence, randomly generate a channel output
  sequence $\cvy = (\cy_1, \cy_2, \ldots, \cy_n)$.

\item Estimate $H(\sX)$, $H(\sY)$, $H(\sXY)$ by computing $-\frac{1}{n}
  \log\bigl( g(\cvx) \bigr)$, $-\frac{1}{n} \log\bigl( g(\cvy) \bigr)$,
  $-\frac{1}{n} \log \bigl( g(\cvx,\cvy) \bigr)$, where $g(\cvx) =
  \sum_{\vy, \, \tvs} g(\cvx,\vy,\tvs)$, $g(\cvy) = \sum_{\vx, \, \tvs}
  g(\vx,\cvy,\tvs)$, $g(\cvx,\cvy) = \sum_{\tvs} g(\cvx,\cvy,\tvs)$.

\item Combine the above estimates to obtain an estimate of $I(\sX;\sY)$.

\end{enumerate}
Thanks to the close relationship between the NFG in
Fig.~\ref{fig:FMSC:high:level:1} and the NFG in
Fig.~\ref{fig:QFMSC:high:level:1}, it is \emph{formally} straightforward to
generalize the above procedure to channels with a quantum state. Namely, one
simply has to replace Step~3) by Step~3'), where
\begin{enumerate}

\item[3')] Estimate $H(\sX)$, $H(\sY)$, $H(\sXY)$ by computing $-\frac{1}{n}
  \log\bigl( g(\cvx) \bigr)$, $-\frac{1}{n} \log\bigl( g(\cvy) \bigr)$,
  $-\frac{1}{n} \log \bigl( g(\cvx,\cvy) \bigr)$, where $g(\cvx) =
  \sum_{\vy, \, \vs, \, \vs'} g(\cvx,\vy,\vs,\vs')$, $g(\cvy) = \sum_{\vx, \,
    \vs, \, \vs'} g(\vx,\cvy,\vs,\vs')$, $g(\cvx,\cvy) = \sum_{\vs, \, \vs'}
  g(\cvx,\cvy,\vs,\vs')$.

\end{enumerate}
In order to efficiently compute all the relevant quantities, one can apply
suitable closing-the-box operations as in~\cite{Loeliger:Vontobel:15:1:subm},
in particular as in Section~IV of~\cite{Loeliger:Vontobel:15:1:subm}. This is
equivalent to applying the sum-product algorithm on a modified version of the
underlying NFG, where edges are suitably merged so that the modified NFG does
not contain cycles and so that the computed marginals are exact.

\ifx\longversion\x

\vspace{-1.5mm}

\noindent
\begin{center}
  \textcolor{black}{--- See Appendix~\ref{sec:app:4} for additional
    comments. ---}
\end{center}

\vspace{-3.0mm}

\fi

\section{Numerical Examples}
\label{sec:numerical:examples:1}

In Figs.~\ref{fig:QGEC:plot:1}--\ref{fig:QGEC:plot:4}, we present some
numerical information rate (IR) estimates for various setups of the Quantum
Gilbert--Elliott channel where the channel input process is an i.i.d.\ process
with $p_X(0) = p_X(1) = 1/2$. (See the figure captions for further details.)
In Figs.~\ref{fig:QGEC:plot:1}--\ref{fig:QGEC:plot:4}, we also show some
auxiliary-channel-based information rate lower bound estimates that are based
on auxiliary channels with a classical
state~\cite{Arnold:Loeliger:Vontobel:Kavcic:Zeng:06:1}. These auxiliary
channels were optimized with the help of the techniques
in~\cite{Sadeghi:Vontobel:Shams:09:1}. Finally, Fig.~\ref{fig:QGEC:plot:2}
includes an auxiliary-channel-based information rate lower bound estimate that
is based on an auxiliary channel with a quantum state. As already emphasized
beforehand, these lower bounds represent rates that are achievable with the
help of a mismatched decoder~\cite{Ganti:Lapidoth:Telatar:00:1}.

\ifx\longversion\x

\vspace{-1.5mm}

\noindent
\begin{center}
  \textcolor{black}{--- See Appendix~\ref{sec:app:5} for additional
    comments. ---}
\end{center}

\vspace{-3.0mm}

\fi

\section*{Acknowledgment}
\label{sec:ack:1}

It is a great pleasure to acknowledge discussions on topics related to this
paper with Andi Loeliger.

\newpage

\begin{figure}[t]
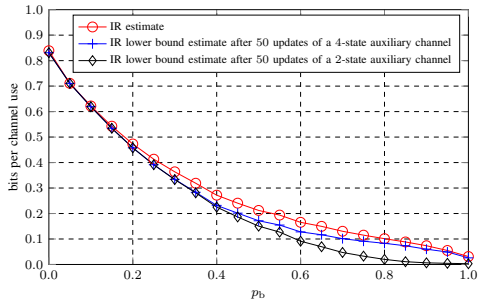

  \centering
  \figSimulationPlotOne
  \caption{Quantum Gilbert–Elliott Channel: $\pgood = 0.05$ is fixed; $\pbad$
    varies from $0$ to $1$; $U = \exp(-\i \alpha H)$, where $H$ is some fixed
    Hermitian matrix and where $\alpha = 1$ is fixed; $n = 10^5$.}
  \label{fig:QGEC:plot:1}
  \vspace{-0.25cm}
\end{figure}

\begin{figure}
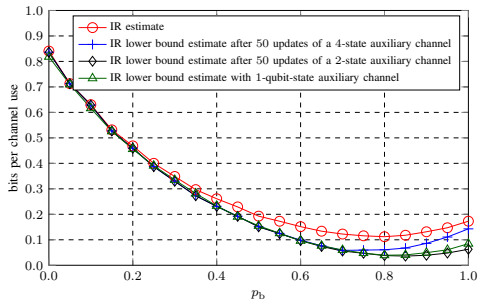

  \centering
  \vspace{0.6cm}
  \figSimulationPlotTwo
  \caption{Variant of the Quantum Gilbert–Elliott Channel where the state
    quantum system consists of two qubits whose evolution is described by $U$,
    but where only one of the qubits interacts directly with the transmit
    quantum system: $\pgood = 0.05$ is fixed; $\pbad$ varies from $0$ to $1$;
    $U = \exp(-\i \alpha H)$, where $H$ is some fixed Hermitian matrix and
    where $\alpha = 1.2$ is fixed; $n = 10^5$.}
  \label{fig:QGEC:plot:2}
  \vspace{-0.25cm}
\end{figure}

\begin{figure}[t]
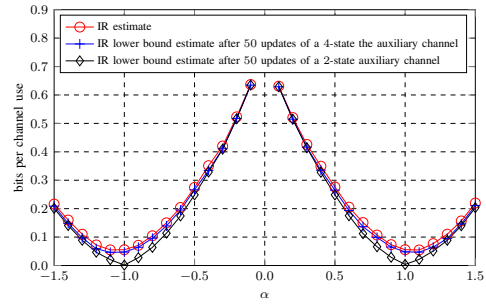

  \centering
  \figSimulationPlotThree
  \caption{Quantum Gilbert–Elliott Channel: $\pgood = 0.05$ is fixed; $\pbad =
    0.95$ is fixed; $U = \exp(-\i \alpha H)$, where $H$ is the same Hermitian
    matrix as in Fig.~\ref{fig:QGEC:plot:1} and where $\alpha$ varies from
    $-1.5$ to $+1.5$; $n = 10^5$. (No information rate estimates are included
    for $\alpha$ around $0$ because of slow mixing of the channel.)}
  \label{fig:QGEC:plot:3}
  \vspace{-0.25cm}
\end{figure}

\begin{figure}
  \centering
  \figSimulationPlotFour
  \caption{Same variant of the Quantum Gilbert–Elliott Channel as in
    Fig.~\ref{fig:QGEC:plot:2}: $\pgood = 0.05$ is fixed; $\pbad = 0.95$ is
    fixed; $U = \exp(-\i \alpha H)$, where $H$ is the same Hermitian matrix as
    in Fig.~\ref{fig:QGEC:plot:2} and where $\alpha$ varies from $-1.5$ to
    $+1.5$; $n = 10^5$. (No information rate estimates are included for
    $\alpha$ around $0$ because of slow mixing of the channel.)}
  \label{fig:QGEC:plot:4}
  \vspace{-0.25cm}
\end{figure}

\bibliographystyle{IEEEtran}
\bibliography{/home/vontobel/references/references}

\ifx\longversion\x

\newpage

\appendices


%
%

%

%
%

\section*{Notation}
\label{sec:app:0}

In these appendices, we will use $\vx_{\ell_1}^{\ell_2}$ to denote the vector
$(x_{\ell_1}, \ldots, x_{\ell_2})$, $\vy_{\ell_1}^{\ell_2}$ to denote the
vector $(y_{\ell_1}, \ldots, y_{\ell_2})$, etc., where $\ell_1$ and $\ell_2$
are integers satisfying $\ell_1 \leq \ell_2$.

\section{Supplementary Notes for
               Section~\ref{sec:channel:with:classical:state:1}}
\label{sec:app:1}

The main purpose of this appendix is to prove
Lemma~\ref{lemma:classical:channel:valid:pmf:1} (see below) about the global
function of the NFG in Fig.~\ref{fig:FMSC:high:level:1}, which is associated
with a classical channel with memory.

Let $g(\vx_1^n, \vy_1^n, \tvs_0^n)$ be the global function of the NFG in
Fig.~\ref{fig:FMSC:high:level:1}, \ie,
\begin{align}
  g(\vx_1^n, \vy_1^n, \tvs_0^n)
    &\defeq
       p_{\tilde S_0}(\tilde s_0) 
       \cdot 
       \prod_{\ell=1}^{n}
         \Bigl(
           p_X(x_{\ell})
           \cdot
           W(\tilde s_{\ell}, y_{\ell} | \tilde s_{\ell-1}, x_{\ell})
         \Bigr) \ .
           \label{eq:classical:state:channel:global:function:1}
\end{align}
Moreover, let $Q(\vx_1^n)$ be obtained by a suitable closing-the-box operation
around parts of the NFG in Fig.~\ref{fig:FMSC:high:level:1} (see the blue box
in Fig.~\ref{fig:FMSC:high:level:1}), \ie,
\begin{align}
  Q(\vx_1^n)
    &\defeq 
       \prod_{\ell=1}^{n}
         p_X(x_{\ell}) \ ,
\end{align}
and let $W(\vy_1^n | \vx_1^n)$ be obtained by a suitable closing-the-box
operation around parts of the NFG in Fig.~\ref{fig:FMSC:high:level:1} (see the
red box in Fig.~\ref{fig:FMSC:high:level:1}), \ie,
\begin{align}
  W(\vy_1^n | \vx_1^n)
    &\defeq
       \sum_{\tvs_0^n}
         p_{\tilde S_0}(\tilde s_0) 
         \cdot 
         \prod_{\ell=1}^{n} 
           W(\tilde s_{\ell}, y_{\ell} | \tilde s_{\ell-1}, x_{\ell}) \ .
\end{align}
(Note that here the closing-the-box operation for the blue box is trivial in
the sense that there are no edges completely inside the blue box, and so there
are no variables to sum over.)

Recall that the factorization of $g(\vx_1^n, \vy_1^n, \tvs_0^n)$
in~\eqref{eq:classical:state:channel:global:function:1} follows from the
following assumptions on our source/channel model:
\begin{itemize}
  
\item The input process is an i.i.d.\ process.

\item Conditioned on $\tilde s_{\ell-1}$ and $x_{\ell}$, the channel state
  $\tilde s_{\ell}$ and channel output $y_{\ell}$ are conditionally
  independent of $\tvs_0^{\ell-2}$, $\vx_1^{\ell-1}$, and $\vy_1^{\ell-1}$.

\end{itemize}
For more context and further details, we refer to~\cite{Gallager:68}. (Note
that what we call finite-state-machine channels (FSMCs) are called
finite-state channels in~\cite{Gallager:68}.)

\begin{lemma}
  \label{lemma:classical:channel:valid:pmf:1}

  Assume that the channel model $W$ is such
  that~\eqref{eq:classical:W:condition:1}
  and~\eqref{eq:classical:W:condition:2} hold. Then the function $g(\vx_1^n,
  \vy_1^n, \tvs_0^n)$ is a pmf over $\vx_1^n$, $\vy_1^n$, and $\tvs_0^n$, \ie,
  \begin{align}
    \forall \ \vx_1^n, \vy_1^n, \tvs_0^n : \ \ 
      g(\vx_1^n, \vy_1^n, \tvs_0^n) &\geq 0 \ ,
        \label{eq:classical:state:channel:global:function:property:1} \\
      \sum_{\vx_1^n, \, \vy_1^n, \, \tvs_0^n} 
        g(\vx_1^n, \vy_1^n, \tvs_0^n) &= 1 \ .
        \label{eq:classical:state:channel:global:function:property:2}
  \end{align}
  Moreover, the function $Q(\vx_1^n)$ is a pmf over $\vx_1^n$, and the
  function $W(\vy_1^n | \vx_1^n)$ is a conditional pmf over $\vy_1^n$ given
  $\vx_1^n$.
\end{lemma}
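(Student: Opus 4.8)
The plan is to verify the three assertions in order, deriving everything from the two hypotheses~\eqref{eq:classical:W:condition:1}--\eqref{eq:classical:W:condition:2} together with the facts that $p_{\tilde S_0}$ and $p_X$ are pmfs; the normalization of $g$ is the only part requiring actual work. For nonnegativity~\eqref{eq:classical:state:channel:global:function:property:1}, note that in the product~\eqref{eq:classical:state:channel:global:function:1} each factor $p_{\tilde S_0}(\tilde s_0)$ and $p_X(x_\ell)$ is nonnegative because it comes from a pmf, and each factor $W(\tilde s_\ell, y_\ell \mid \tilde s_{\ell-1}, x_\ell)$ is nonnegative by~\eqref{eq:classical:W:condition:1}; hence $g \geq 0$. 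The same observation gives $Q(\vx_1^n) \geq 0$ and, since $W(\vy_1^n \mid \vx_1^n)$ is a sum of nonnegative terms over $\tvs_0^n$, also $W(\vy_1^n \mid \vx_1^n) \geq 0$.

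For~\eqref{eq:classical:state:channel:global:function:property:2} I would use a backward telescoping computation, peeling off one time step at a time. Summing first over $(\tilde s_n, y_n)$, the only factor involving these variables is $W(\tilde s_n, y_n \mid \tilde s_{n-1}, x_n)$, and by~\eqref{eq:classical:W:condition:2} this sums to $1$ irrespective of $(\tilde s_{n-1}, x_n)$; then $\sum_{x_n} p_X(x_n) = 1$ eliminates the time-$n$ block entirely. What remains is an expression of exactly the same form with $n$ replaced by $n-1$, so an induction on $n$ (base case $n=0$, where $g = p_{\tilde S_0}$ sums to $1$) carries the argument down to $\sum_{\tilde s_0} p_{\tilde S_0}(\tilde s_0) = 1$. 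The one place where the specific structure of the hypotheses is used is the claim that after summing out $(\tilde s_n, y_n)$ the remainder is independent of $(\tilde s_{n-1}, x_n)$: this is precisely~\eqref{eq:classical:W:condition:2}, and it is what makes the telescoping collapse cleanly. I expect this to be the only genuinely load-bearing step; the rest is bookkeeping of the summation order, and there is no real obstacle.

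Finally, $Q(\vx_1^n) = \prod_{\ell=1}^n p_X(x_\ell)$ is a product of pmfs over disjoint coordinates, so $\sum_{\vx_1^n} Q(\vx_1^n) = \prod_{\ell=1}^n \sum_{x_\ell} p_X(x_\ell) = 1$, making it a pmf. For $W(\vy_1^n \mid \vx_1^n)$, having already checked nonnegativity, I would fix $\vx_1^n$ and run the same backward telescoping on $\sum_{\vy_1^n} W(\vy_1^n \mid \vx_1^n) = \sum_{\vy_1^n, \tvs_0^n} p_{\tilde S_0}(\tilde s_0) \prod_{\ell=1}^n W(\tilde s_\ell, y_\ell \mid \tilde s_{\ell-1}, x_\ell)$; here no $p_X$ factors appear since the inputs are held fixed, but the sums over $(\tilde s_\ell, y_\ell)$ for $\ell = n, n-1, \dots, 1$ still collapse by~\eqref{eq:classical:W:condition:2}, leaving $\sum_{\tilde s_0} p_{\tilde S_0}(\tilde s_0) = 1$. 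Hence $W(\cdot \mid \vx_1^n)$ is a conditional pmf over $\vy_1^n$ given $\vx_1^n$, completing the proof.
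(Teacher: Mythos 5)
Your proof is correct and follows essentially the same route as the paper's: nonnegativity is read off from the factorization together with~\eqref{eq:classical:W:condition:1}, and normalization is obtained by peeling off the last time step via~\eqref{eq:classical:W:condition:2}, then $\sum_{x_n} p_X(x_n)=1$, and iterating down to $\sum_{\tilde s_0} p_{\tilde S_0}(\tilde s_0)=1$. Framing the repeated collapse as an induction on $n$ is a cosmetic difference only; the $Q$ and $W$ claims are handled just as the paper sketches them.
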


\begin{proof}
  The fact that $g(\vx_1^n, \vy_1^n, \tvs_0^n) \geq 0$ for all $(\vx_1^n,
  \vy_1^n, \tvs_0^n)$ follows immediately
  from~\eqref{eq:classical:state:channel:global:function:1}
  and~\eqref{eq:classical:W:condition:1}. On the other hand, the fact that
  $\sum_{\vx_1^n, \, \vy_1^n, \, \tvs_0^n} g(\vx_1^n, \vy_1^n, \tvs_0^n) = 1$
  can be shown by using~\eqref{eq:classical:W:condition:2} repeatedly. Namely,
  \begin{align}
    &
    \sum_{\vx_1^n, \, \vy_1^n, \, \tvs_0^n}
      g(\vx_1^n, \vy_1^n, \tvs_0^n)
        \nonumber \\
      &\quad\quad
       = \sum_{\vx_1^n, \, \vy_1^n, \, \tvs_0^n}
           p_{\tilde S_0}(\tilde s_0) 
           \cdot 
           \prod_{\ell=1}^{n} 
             \Bigl(
               p_X(x_{\ell})
               \cdot 
               W(\tilde s_{\ell}, y_{\ell} | \tilde s_{\ell-1}, x_{\ell})
             \Bigr)
               \nonumber \\
      &\quad\quad
       = \!\!\!
         \sum_{\vx_1^{n-1}, \, \vy_1^{n-1}, \, \tvs_0^{n-1}}
         \!\!\!\!\!\!\!\!\!\!\!
           p_{\tilde S_0}(\tilde s_0) 
           \cdot 
           \prod_{\ell=1}^{n-1} 
             \Bigl(
               p_X(x_{\ell})
               \cdot
               W(\tilde s_{\ell}, y_{\ell} | \tilde s_{\ell-1}, x_{\ell})
             \Bigr)
               \nonumber \\[0.25cm]
      &\quad\quad\hspace{1.75cm}
           \cdot \,
           \underbrace{
             \sum_{x_n}
               p_X(x_n)
               \cdot
               \underbrace{
               \sum_{y_n}
                 \sum_{\tilde s_n}
                   W(\tilde s_n, y_n | \tilde s_{n-1}, x_n)
               }_{= \, 1}
             }_{= \, 1}
             \nonumber \\
      &\quad\quad
       = \!\!\!
         \sum_{\vx_1^{n-1}, \, \vy_1^{n-1}, \, \tvs_0^{n-1}} 
         \!\!\!\!\!\!\!\!\!\!\!
           p_{\tilde S_0}(\tilde s_0) 
           \cdot 
           \prod_{\ell=1}^{n-1} 
             \Bigl(
               p_X(x_{\ell})
               \cdot
               W(\tilde s_{\ell}, y_{\ell} | \tilde s_{\ell-1}, x_{\ell})
             \Bigr)
               \nonumber \\
      &\quad\quad
       = \ldots 
           \nonumber \\
      &\quad\quad
       = \sum_{\tilde s_0}
           p_{\tilde S_0}(\tilde s_0)
             \nonumber \\
      &\quad\quad
       = 1 \ .
           \label{fig:classical:channel:global:function:sum:1}
  \end{align}
  This computation is visualized in Fig.~\ref{fig:CFSM:closing:the:box} by
  applying suitable closing-the-box operations to the NFG in
  Fig.~\ref{fig:FMSC:high:level:1}.

  Showing that the function $Q(\vx_1^n)$ is a pmf over $\vx_1^n$ is
  straightforward, and showing that the function $W(\vy_1^n | \vx_1^n)$ is a
  conditional pmf over $\vy_1^n$ given $\vx_1^n$ can be done analogously to
  the above proof. We omit the details.
\end{proof}

\section{Supplementary Notes for
               Section~\ref{sec:channel:with:quantum:state:1}}
\label{sec:app:2}

The main purpose of this appendix is to prove
Lemma~\ref{lemma:quantum:channel:valid:g:1} (see below) about properties of
the global function associated with a quantum channel with memory as in
Fig.~\ref{fig:QFMSC:high:level:1}. Connections to expressions involving more
standard quantum information processing notation will be discussed in
Appendix~\ref{sec:app:3}.

Let $g(\vx_1^n, \vy_1^n, \vs_0^n, {\vs'}_0^n)$ be the global function of the
NFG in Fig.~\ref{fig:QFMSC:high:level:1}, \ie,
\begin{align}
  &
  g(\vx_1^n, \vy_1^n, \vs_0^n, {\vs'}_0^n)
    \nonumber \\
    &\quad\quad
     \defeq
        \rhoSt{0}(s_0, s'_0) 
         \nonumber \\
    &\quad\quad\quad\quad  
       \cdot \, 
       \prod_{\ell=1}^{n} 
         \left(
           p_X(x_{\ell})
           \cdot
           W^{(y_{\ell} | x_{\ell})}(s_{\ell-1}, s_{\ell}; s'_{\ell-1},
           s'_{\ell}) 
       \right)
         \nonumber \\
    &\quad\quad\quad\quad
         \cdot \, 
         \delta(s'_n, s_n) \ .
           \label{eq:quantum:state:channel:global:function:1}
\end{align}
Moreover, let $g(\vx_1^n, \vy_1^n)$ be obtained from $g(\vx_1^n, \vy_1^n,
\vs_0^n, {\vs'}_0^n)$ by summing over $\vs_0^n$ and ${\vs'}_0^n$, \ie,
\begin{align}
  g(\vx_1^n, \vy_1^n) 
    &\defeq
       \sum_{\vs_0^n, \, {\vs'}_0^n}
         g(\vx_1^n, \vy_1^n, \vs_0^n, {\vs'}_0^n) \ ,
           \label{eq:quantum:state:channel:global:function:marginal:1}
\end{align}
let $Q(\vx_1^n)$ be obtained by a suitable closing-the-box operation
around parts of the NFG in Fig.~\ref{fig:QFMSC:high:level:1} (see the blue box
in Fig.~\ref{fig:QFMSC:high:level:1}), \ie,
\begin{align}
  Q(\vx_1^n)
    &\defeq
       \prod_{\ell=1}^{n}
         p_X(x_{\ell}) \ ,
\end{align}
and let $W(\vy_1^n | \vx_1^n)$ be obtained by a suitable closing-the-box
operation around parts of the NFG in Fig.~\ref{fig:QFMSC:high:level:1} (see the
red box in Fig.~\ref{fig:QFMSC:high:level:1}), \ie,
\begin{align}
  W(\vy_1^n | \vx_1^n)
    &\defeq
       \sum_{\vs_0^n, \, {\vs'}_0^n}
         \rhoSt{0}(s_0, s'_0)
           \nonumber \\
    &\quad\quad
         \cdot \,
         \prod_{\ell=1}^{n}
           \Bigl(
             p_X(x_{\ell})
             \cdot
             W^{(y_{\ell} | x_{\ell})}(s_{\ell-1}, s_{\ell}; s'_{\ell-1}, s'_{\ell})
           \Bigr)
             \nonumber \\
    &\quad\quad
         \cdot \,
         \delta(s'_n, s_n) \ .
             \label{eq:quantum:state:channel:law:1}
\end{align}
(Note that here the closing-the-box operation for the blue box is trivial in
the sense that there are no full edges completely inside the blue box, and so
there are no variables to sum over.)

\begin{lemma}
  \label{lemma:quantum:channel:valid:g:1}

  Assume that the channel law $W$ is such
  that~\eqref{eq:quantum:channel:nfg:property:11}
  and~\eqref{eq:quantum:channel:nfg:property:12} hold. Then the function
  $g(\vx_1^n, \vy_1^n, \vs_0^n, {\vs'}_0^n)$
  satisfies~\eqref{eq:quantum:channel:nfg:property:1}--\eqref{eq:quantum:channel:nfg:property:5}. Moreover,
  the function $Q(\vx_1^n)$ is a pmf over $\vx_1^n$, and the function
  $W(\vy_1^n | \vx_1^n)$ is a conditional pmf over $\vy_1^n$ given $\vx_1^n$.
\end{lemma}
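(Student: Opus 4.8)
The plan is to verify the five properties \eqref{eq:quantum:channel:nfg:property:1}--\eqref{eq:quantum:channel:nfg:property:5} for the explicit $g$ in \eqref{eq:quantum:state:channel:global:function:1} in much the same factor-by-factor spirit as the proof of Lemma~\ref{lemma:classical:channel:valid:pmf:1}. The two hypotheses on $W$ will enter in complementary roles: the trace-preservation-type identity \eqref{eq:quantum:channel:nfg:property:12} (together with $\Tr(\rhoSt{0}) = 1$) yields the normalization \eqref{eq:quantum:channel:nfg:property:3}, while positive semi-definiteness \eqref{eq:quantum:channel:nfg:property:11} (together with $\rhoSt{0}$ being p.s.d.) yields the nonnegativity \eqref{eq:quantum:channel:nfg:property:4}. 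Properties \eqref{eq:quantum:channel:nfg:property:1} and \eqref{eq:quantum:channel:nfg:property:2} are purely formal, and \eqref{eq:quantum:channel:nfg:property:5} will follow from \eqref{eq:quantum:channel:nfg:property:3} by exchanging the order of summation.

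First I would dispose of \eqref{eq:quantum:channel:nfg:property:1} and \eqref{eq:quantum:channel:nfg:property:2}. The former is immediate from \eqref{eq:quantum:state:channel:global:function:1}, since every local factor --- $\rhoSt{0}$, each $p_X(x_\ell)$, each $W^{(y_\ell|x_\ell)}(\cdot)$, and the Kronecker delta --- is complex-valued. For \eqref{eq:quantum:channel:nfg:property:2} I would check the conjugate symmetry factor by factor under the simultaneous swap $\vs_0^n \leftrightarrow {\vs'}_0^n$: $\rhoSt{0}$ is Hermitian as a matrix (being p.s.d.), so $\rhoSt{0}(s'_0, s_0) = \overline{\rhoSt{0}(s_0, s'_0)}$; each $p_X(x_\ell)$ is real; each $\matr{W}^{(y_\ell|x_\ell)}$ is Hermitian by \eqref{eq:quantum:channel:nfg:property:11}, so with the stated row/column labeling $(s_{\ell-1}, s_\ell)$ / $(s'_{\ell-1}, s'_\ell)$ one has $W^{(y_\ell|x_\ell)}(s'_{\ell-1}, s'_\ell; s_{\ell-1}, s_\ell) = \overline{W^{(y_\ell|x_\ell)}(s_{\ell-1}, s_\ell; s'_{\ell-1}, s'_\ell)}$; and $\delta(s_n, s'_n) = \delta(s'_n, s_n)$ is real and symmetric. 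Multiplying these relations gives $g(\vx_1^n, \vy_1^n, {\vs'}_0^n, \vs_0^n) = \overline{g(\vx_1^n, \vy_1^n, \vs_0^n, {\vs'}_0^n)}$.

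Next, for the normalization \eqref{eq:quantum:channel:nfg:property:3} I would peel off the factors from $\ell = n$ down to $\ell = 1$, exactly paralleling the classical computation in the proof of Lemma~\ref{lemma:classical:channel:valid:pmf:1} (and visualized by repeated closing-the-box operations on Fig.~\ref{fig:QFMSC:high:level:1}). In the inductive step, if the current partial sum still carries a trailing factor $\delta(s'_\ell, s_\ell)$, then summing over $y_\ell, s_\ell, s'_\ell$ and invoking \eqref{eq:quantum:channel:nfg:property:12} replaces $W^{(y_\ell|x_\ell)}(\cdot)\,\delta(s'_\ell, s_\ell)$ by $\delta(s'_{\ell-1}, s_{\ell-1})$, which no longer depends on $x_\ell$, so that $\sum_{x_\ell} p_X(x_\ell) = 1$ removes the $\ell$-th input factor as well. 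After $n$ such steps only $\sum_{s_0, s'_0} \rhoSt{0}(s_0, s'_0)\,\delta(s'_0, s_0) = \Tr(\rhoSt{0}) = 1$ remains. Property \eqref{eq:quantum:channel:nfg:property:5} is then obtained from $\sum_{\vx_1^n, \vy_1^n} g(\vx_1^n, \vy_1^n) = \sum_{\vx_1^n, \vy_1^n, \vs_0^n, {\vs'}_0^n} g(\vx_1^n, \vy_1^n, \vs_0^n, {\vs'}_0^n) = 1$.

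The step I expect to be the main obstacle is \eqref{eq:quantum:channel:nfg:property:4}, the nonnegativity of $g(\vx_1^n, \vy_1^n)$. Here I would factor out $\prod_{\ell=1}^{n} p_X(x_\ell) \geq 0$ and introduce the partial state-marginal operators $\sigmaSt{\ell}$, where $\sigmaSt{\ell}(s_\ell, s'_\ell)$ is obtained by summing $\rhoSt{0}(s_0, s'_0)\,\prod_{j=1}^{\ell} W^{(y_j|x_j)}(s_{j-1}, s_j; s'_{j-1}, s'_j)$ over all $s_0, \ldots, s_{\ell-1}$ and $s'_0, \ldots, s'_{\ell-1}$ (these are exactly the operators that appear in the closing-the-box computations on Fig.~\ref{fig:QFMSC:high:level:1}). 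Then $g(\vx_1^n, \vy_1^n) = \bigl(\prod_{\ell} p_X(x_\ell)\bigr)\sum_{s_n, s'_n} \sigmaSt{n}(s_n, s'_n)\,\delta(s'_n, s_n) = \bigl(\prod_{\ell} p_X(x_\ell)\bigr)\,\Tr\bigl(\sigmaSt{n}\bigr)$, so it suffices to show that each $\sigmaSt{\ell}$, read as a matrix over $\setS$, is p.s.d. I would argue this by induction on $\ell$: the base case is $\sigmaSt{0} = \rhoSt{0} \succeq 0$; for the inductive step, using that $\matr{W}^{(y_\ell|x_\ell)} \succeq 0$ admits a decomposition $W^{(y_\ell|x_\ell)}(s_{\ell-1}, s_\ell; s'_{\ell-1}, s'_\ell) = \sum_k A_k(s_\ell, s_{\ell-1})\,\overline{A_k(s'_\ell, s'_{\ell-1})}$ (the spectral decomposition of the Choi matrix, with the square roots of its eigenvalues absorbed into the eigenvectors), the recursion $\sigmaSt{\ell}(s_\ell, s'_\ell) = \sum_{s_{\ell-1}, s'_{\ell-1}} W^{(y_\ell|x_\ell)}(s_{\ell-1}, s_\ell; s'_{\ell-1}, s'_\ell)\,\sigmaSt{\ell-1}(s_{\ell-1}, s'_{\ell-1})$ takes the Kraus form $\sigmaSt{\ell} = \sum_k A_k\,\sigmaSt{\ell-1}\,A_k^{\Herm}$, which is manifestly p.s.d.\ whenever $\sigmaSt{\ell-1}$ is. Hence $\Tr(\sigmaSt{n}) \geq 0$ and $g(\vx_1^n, \vy_1^n) \geq 0$; its realness is already guaranteed by \eqref{eq:quantum:channel:nfg:property:2}. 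Equivalently, one can phrase this as: each p.s.d.\ $\matr{W}^{(y_\ell|x_\ell)}$ is the Choi matrix of a completely positive map, compositions of completely positive maps are positive, and the trace of a p.s.d.\ operator is nonnegative --- the only delicate point being to keep the input/output index and transpose conventions of the Choi isomorphism straight, which the explicit Kraus decomposition above sidesteps. Finally, $Q(\vx_1^n) = \prod_{\ell} p_X(x_\ell)$ is a pmf because $p_X$ is, and the claim that $W(\vy_1^n | \vx_1^n)$ is a conditional pmf over $\vy_1^n$ given $\vx_1^n$ is obtained by combining the peeling argument (for the normalization over $\vy_1^n$ at each fixed $\vx_1^n$) with the Kraus/positivity argument (for nonnegativity); the routine details I would omit, just as in Lemma~\ref{lemma:classical:channel:valid:pmf:1}.
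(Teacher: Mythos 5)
Your proof is correct, and for properties \eqref{eq:quantum:channel:nfg:property:1}, \eqref{eq:quantum:channel:nfg:property:2}, \eqref{eq:quantum:channel:nfg:property:3}, and \eqref{eq:quantum:channel:nfg:property:5} it proceeds along essentially the same lines as the paper's proof in Appendix~\ref{sec:app:2}: \eqref{eq:quantum:channel:nfg:property:1} is immediate from complex-valuedness of the factors; \eqref{eq:quantum:channel:nfg:property:2} is verified factor by factor using the Hermitian symmetry implied by p.s.d.\ and the symmetry of the Kronecker delta; \eqref{eq:quantum:channel:nfg:property:3} is proved by peeling off the $\ell$-th block with \eqref{eq:quantum:channel:nfg:property:12} followed by $\sum_{x_{\ell}} p_X(x_{\ell}) = 1$, ending at $\Tr(\rhoSt{0}) = 1$; and \eqref{eq:quantum:channel:nfg:property:5} follows by summing \eqref{eq:quantum:state:channel:channel:global:function:marginal:1} if that label existed — in the paper it is \eqref{eq:quantum:state:channel:global:function:marginal:1} combined with \eqref{eq:quantum:channel:nfg:property:3}.

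Where your argument genuinely goes beyond the paper's written proof is property \eqref{eq:quantum:channel:nfg:property:4}. The paper only establishes the realness of $g(\vx_1^n,\vy_1^n)$ (via the derivation \eqref{eq:quantum:channel:g:sum:marginal:property:1}, which is the same relabeling argument you note follows from \eqref{eq:quantum:channel:nfg:property:2}), and does not explicitly supply the nonnegativity half of ``$\in \mathbb{R}_{\geq 0}$.'' You do: you factor $g(\vx_1^n,\vy_1^n) = \bigl(\prod_{\ell} p_X(x_{\ell})\bigr)\Tr\bigl(\sigmaSt{n}\bigr)$, take a rank-one p.s.d.\ decomposition $W^{(y_{\ell}|x_{\ell})}(s_{\ell-1},s_{\ell};s'_{\ell-1},s'_{\ell}) = \sum_k A_k(s_{\ell},s_{\ell-1})\,\overline{A_k(s'_{\ell},s'_{\ell-1})}$, and show by induction that $\sigmaSt{\ell} = \sum_k A_k\,\sigmaSt{\ell-1}\,A_k^{\Herm}$ remains p.s.d., so $\Tr\bigl(\sigmaSt{n}\bigr) \geq 0$. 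This is precisely the Choi/complete-positivity mechanism the paper gestures at (when it remarks that $\sum_{y_{\ell}} \matr{W}^{(y_{\ell}|x_{\ell})}$ is a Choi matrix) but does not carry out, so on this one point your treatment is more complete. As you observe, once nonnegativity is in hand the separate realness computation becomes redundant. The concluding remarks about $Q(\vx_1^n)$ and $W(\vy_1^n|\vx_1^n)$ match the paper's, which likewise defers the routine details.
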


  \bigformulatop{21}{%
    \begin{align}
      \sum_{\vx_1^n, \, \vy_1^n, \, \vs_0^n, \, {\vs'}_0^n}
        g(\vx_1^n, \vy_1^n, \vs_0^n, {\vs'}_0^n)
        &= \sum_{\vx_1^n, \, \vy_1^n, \, \vs_0^n, \, {\vs'}_0^n}
             \rhoSt{0}(s_0, s'_0) 
             \cdot \, 
             \prod_{\ell=1}^{n} 
               \left(
                 p_X(x_{\ell})
                 \cdot
                 W^{(y_{\ell} | x_{\ell})}(s_{\ell-1}, s_{\ell}; s'_{\ell-1}, s'_{\ell})
              \right)
            \cdot \, 
            \delta(s'_n, s_n)
                \nonumber \\
        &= \sum_{\vx_1^{n-1}, \, \vy_1^{n-1}, \, \vs_0^{n-1}, \, {\vs'}_0^{n-1}}
             \rhoSt{0}(s_0, s'_0) 
             \cdot \, 
             \prod_{\ell=1}^{n-1} 
               \left(
                 p_X(x_{\ell})
                 \cdot
                 W^{(y_{\ell} | x_{\ell})}(s_{\ell-1}, s_{\ell}; s'_{\ell-1}, s'_{\ell})
               \right) 
               \nonumber \\
         &\quad\quad\hspace{2cm}
           \cdot \, 
           \underbrace{
             \sum_{x_n}
               p_X(x_n)
               \cdot \
               \underbrace{
                 \sum_{y_n, \, s_n, \, s'_n}
                   W^{(y_n | x_n)}(s_{n-1}, s_{n}; s'_{n-1}, s'_{n})
                   \cdot
                   \delta(s'_n, s_n)
                }_{= \, \delta(s'_{n-1}, s_{n-1})}
           }_{= \, \delta(s'_{n-1}, s_{n-1})} 
             \nonumber \\
        &= \sum_{\vx_1^{n-1}, \, \vy_1^{n-1}, \, \vs_0^{n-1}, \, {\vs'}_0^{n-1}}
             \!\!\!\!\!\!\!\!
             \rhoSt{0}(s_0, s'_0) 
             \cdot \, 
             \prod_{\ell=1}^{n-1} 
               \left(
                 p_X(x_{\ell})
                 \cdot
                 W^{(y_{\ell} | x_{\ell})}(s_{\ell-1}, s_{\ell}; s'_{\ell-1}, s'_{\ell})
               \right) 
             \cdot
             \delta(s'_{n-1}, s_{n-1})
               \nonumber \\
        &= \ldots
             \nonumber \\
        &= \sum_{s_0, \, s'_0}
             \rhoSt{0}(s_0, s'_0)
             \cdot
             \delta(s'_0, s_0)
               \nonumber \\
        &= 1 \ .
             \label{eq:quantum:channel:g:sum:property:1}
    \end{align}
  }

  \bigformulatop{22}{%
    \begin{align}
      \overline{g(\vx_1^n, \vy_1^n)}
        &= \overline{
             \sum_{\vs_0^n, \, {\vs'}_0^n}
               g(\vx_1^n, \vy_1^n, \vs_0^n, {\vs'}_0^n)
           }
             \nonumber \\
        &= \overline{
             \sum_{\vs_0^n, \, {\vs'}_0^n}
               \rhoSt{0}(s_0, s'_0) 
               \cdot \, 
               \prod_{\ell=1}^{n} 
                 \Bigl(
                   p_X(x_{\ell})
                   \cdot
                   W^{(y_{\ell} | x_{\ell})}(s_{\ell-1}, s_{\ell}; s'_{\ell-1}, s'_{\ell})
                  \Bigr)
                \cdot \, 
                \delta(s'_n, s_n)
           }
             \nonumber \\
        &= \sum_{\vs_0^n, \, {\vs'}_0^n}
             \overline{\rhoSt{0}(s_0, s'_0)}
             \cdot \, 
             \prod_{\ell=1}^{n} 
               \Bigl(
                 \overline{p_X(x_{\ell})}
                 \cdot
                 \overline{
                   W^{(y_{\ell} | x_{\ell})}
                     (s_{\ell-1}, s_{\ell}; s'_{\ell-1}, s'_{\ell})
                 }
               \Bigr)
             \cdot \, 
             \overline{\delta(s'_n, s_n)}
               \nonumber \\
        &\onestareq
           \sum_{\vs_0^n, \, {\vs'}_0^n}
             \rhoSt{0}(s'_0, s_0)
             \cdot \, 
             \prod_{\ell=1}^{n} 
               \Bigl(
                 p_X(x_{\ell})
                 \cdot
                 W^{(y_{\ell} | x_{\ell})}(s'_{\ell-1}, s'_{\ell}; s_{\ell-1}, s_{\ell})
               \Bigr)
             \cdot \, 
             \delta(s _n, s'_n)
               \nonumber \\
        &\twostarseq
           \sum_{\vs_0^n, \, {\vs'}_0^n}
             \rhoSt{0}(s_0, s'_0)
             \cdot \, 
             \prod_{\ell=1}^{n} 
               \Bigl(
                 p_X(x_{\ell})
                 \cdot
                 W^{(y_{\ell} | x_{\ell})}(s_{\ell-1}, s_{\ell}; s'_{\ell-1}, s'_{\ell})
               \Bigr)
             \cdot \, 
             \delta(s'_n, s_n) 
               \nonumber \\
        &= g(\vx_1^n, \vy_1^n) \ .
                \label{eq:quantum:channel:g:sum:marginal:property:1}
    \end{align}
  }

\begin{proof}
  We prove the first claim as follows.
  \begin{itemize}

  \item Property~\eqref{eq:quantum:channel:nfg:property:1} follows immediately
    from~\eqref{eq:quantum:state:channel:global:function:1} and the fact that
    all the factors appearing in this expression are complex-valued.

  \item Property~\eqref{eq:quantum:channel:nfg:property:2} follows
    from~\eqref{eq:quantum:state:channel:global:function:1} and the assumption
    that $W^{(y_{\ell} | x_{\ell})}$ is a p.s.d.~matrix for all $y_{\ell}$ and
    $x_{\ell}$, and with that a Hermitian matrix for all $y_{\ell}$ and
    $x_{\ell}$, \ie,
    \begin{align}
      W^{(y_{\ell} | x_{\ell})}(s_{\ell-1}, s_{\ell}; s'_{\ell-1}, s'_{\ell}) 
        &= \overline{
             W^{(y_{\ell} | x_{\ell})}(s'_{\ell-1}, s'_{\ell}; s_{\ell-1}, s_{\ell})
           }
    \end{align}
    for all $y_{\ell}$, $x_{\ell}$, $s_{\ell-1}$, $s_{\ell}$, $s'_{\ell-1}$,
    and $s'_{\ell}$. Moreover, one uses the real-valuedness of the
    Kronecker-delta function and its symmetry in the arguments.
  
  \item Property~\eqref{eq:quantum:channel:nfg:property:3} can be shown by
    using~\eqref{eq:quantum:channel:nfg:property:12} repeatedly, see the
    derivation in Eq.~\eqref{eq:quantum:channel:g:sum:property:1} at the top
    of the next page. This computation is visualized in
    Fig.~\ref{fig:QFSM:closing:the:box} by applying suitable closing-the-box
    operations to the NFG in Fig.~\ref{fig:QFMSC:high:level:1}.
  
  \setcounter{equation}{22}

  \item Property~\eqref{eq:quantum:channel:nfg:property:4} follows immediately
    from proving $\overline{g(\vx_1^n, \vy_1^n)} = g(\vx_1^n, \vy_1^n)$ for
    all $\vx_1^n$ and $\vy_1^n$, see the derivation in
    Eq.~\eqref{eq:quantum:channel:g:sum:marginal:property:1} in the middle of
    the next page. There, Step~$\onestar$ follows from the p.s.d.\ property
    of $\rhoSt{0}$ and $W^{(y_{\ell} | x_{\ell})}$, along with
    the real-valuedness of the Kronecker-delta function and its symmetry in
    the arguments. Moreover, Step~$\twostars$ follows from relabeling the
    summation variables, \ie, $\vs_0^n$ becomes ${\vs'}_0^n$ and vice-versa.

  \setcounter{equation}{23}

  \item Property~\eqref{eq:quantum:channel:nfg:property:5} follows immediately
    from~\eqref{eq:quantum:state:channel:global:function:marginal:1}
    and~\eqref{eq:quantum:channel:nfg:property:3}.

  \end{itemize}

  Showing that the function $Q(\vx_1^n)$ is a pmf over $\vx_1^n$ is
  straightforward, and showing that the function $W(\vy_1^n | \vx_1^n)$ is a
  conditional pmf over $\vy_1^n$ given $\vx_1^n$ can be done analogously to
  the above proof. We omit the details.
\end{proof}

\begin{figure}[t]
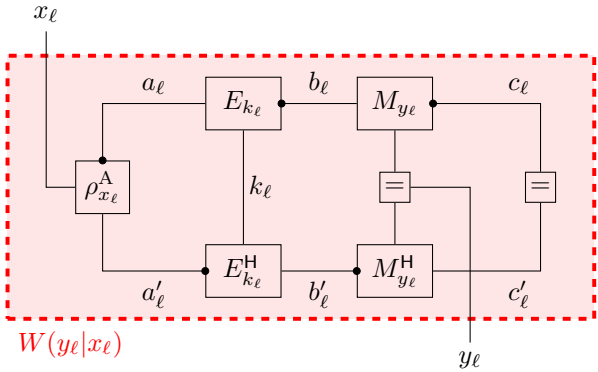

  \centering
  \figMemorylessQuantumChannel
  \caption{Classical communication over a memoryless quantum channel.}
  \label{fig:memoryless:quantum:channel:1}
\end{figure}

\section{Supplementary Notes for
                Example~\ref{example:QGEChannel:1}}
\label{sec:app:3}

The main purpose of this appendix is to give some more details w.r.t.\
Example~\ref{example:QGEChannel:1}. We do this by first discussing quantum
channels without memory and then quantum channels with memory. This appendix
should also help making the transition between standard quantum information
processing notation and our NFG representations.

Let us emphasize that the exact details of Example~\ref{example:QGEChannel:1}
are \emph{not} important. What is important is the framework that allows us to
deal with this type of channels.

\subsection{Classical Communication over a Memoryless Quantum Channel}

Alice wants to communicate some classical information to Bob. To that end, for
time indices $\ell = 1, \ldots, n$, they can use the following quantum channel
characterized by a completely-positive trace-preserving (CPTP) map
\begin{alignat*}{2}
  \Phi_{\ell} : \ \ 
    && \DensOp\bigl( \HAt{\ell} \bigr)
       &\to
       \DensOp\bigl( \HBt{\ell} \bigr) \\
    && \rhoAt{\ell}
       &\mapsto
       \rhoBt{\ell}
\end{alignat*}
The following objects appear in this expression:
\begin{itemize}

\item $\HAt{\ell}$ is a Hilbert space on Alice's side.

\item $\HBt{\ell}$ is a Hilbert space on Bob's side.

\item $\DensOp\bigl( \HAt{\ell} \bigr)$ is the set of density operators defined
  on $\HAt{\ell}$.

\item $\DensOp\bigl( \HBt{\ell} \bigr)$ is the set of density operators defined
  on $\HBt{\ell}$.

\end{itemize}
We make the following assumptions:
\begin{itemize}

\item All Hilbert spaces are finite dimensional.

\item In order to be specific, the mapping $\Phi_{\ell}$ is defined via Kraus
  operators $\{ E_{k_{\ell}} \}_{k_{\ell}}$, \ie,
  \begin{align}
    \Phi_{\ell}\bigl( \rhoAt{\ell} \bigr)
      &\defeq
         \sum_{k_{\ell}}
           E_{k_{\ell}}
           \,
           \rhoAt{\ell}
           \,
           E^\Herm_{k_{\ell}} \ .
  \end{align}
  Note that the operators $\{ E_{k_{\ell}} \}_{k_{\ell}}$ have to satisfy the
  condition $\sum_{k_{\ell}} E^\Herm_{k_{\ell}} \, E_{k_{\ell}} = I$, where
  $I$ is an identity matrix of suitable size.

\item Alice can prepare quantum states in $\HAt{\ell}$ described by density
  operators $\{ \rhoAxell \}_{x_{\ell} \in \set{X}}$.

\item Bob can make a quantum measurement on $\DensOp\bigl( \HBt{\ell} \bigr)$
  described by the measurement operators $\{ M_{y_{\ell}} \}_{y_{\ell} \in
    \set{Y}}$. Specifically, for $\rhoBt{\ell} \in \DensOp\bigl( \HBt{\ell}
  \bigr)$, the measurement outcome is $y_{\ell} \in \set{Y}$ with probability
  \begin{align}
    p_{Y_{\ell}}(y_{\ell})
      &= \Tr\bigl( M_{y_{\ell}} \, \rhoBt{\ell} \, M^\Herm_{y_{\ell}} \bigr) \ .
  \end{align}
  Note that the operators $\{ M_{y} \}_{y_{\ell} \in \set{Y}}$ have to satisfy
  the condition $\sum_{y_{\ell}} M^\Herm_{y_{\ell}} \, M_{y_{\ell}} = I$.

\end{itemize}
For further details about CPTP maps and measurement operators, see, \eg,
\cite{Nielsen:Chuang:00:1}.

Alice and Bob use $n$ independent instantiations of this channel to transmit
classical information as follows.
\begin{itemize}

\item Alice uses a classical code to encode her information word $\vu = (u_1,
  u_2, \ldots, u_k) \in \setU^k$ into a codeword $\vx = (x_1, x_2, \ldots,
  x_n) \in \setX^n$.

\item At time instance $\ell$, Alice transmits $\rhoAt{\ell} =
  \rhoAxell$ via the $\ell$-th instantiation of the memoryless quantum
  channel to Bob.

\item Bob makes a quantum measurement on $\rhoBt{\ell} \defeq
  \Phi_{\ell}(\rhoAt{\ell})$ described by the measurement operators $\{
  M_{y_{\ell}} \}_{y_{\ell} \in \set{Y}}$. The measurement outcome is called
  $y_{\ell}$.

\item Bob decodes $\vy = (y_1, y_2, \ldots, y_n) \in \setY^n$ toward obtaining
  an estimate $\hvu$ of $\vu$.

\end{itemize}
We emphasize that in our setup, the operators $\{ \rhoA_{x_{\ell}}
\}_{x_{\ell} \in \set{X}}$ and $\{ M_{y_{\ell}} \}_{y_{\ell} \in \set{Y}}$ are
given, \ie, they cannot be chosen by Alice and Bob, respectively.

Let $W(y_{\ell}|x_{\ell}) = p_{Y_{\ell}|X_{\ell}}(y_{\ell}|x_{\ell})$ be the
channel law, \ie, the probability of $y_{\ell}$ given $x_{\ell}$. We obtain
\begin{align}
  W(y_{\ell}|x_{\ell})
    &= \Tr
         \left(
           M_{y_{\ell}} 
           \,
           \left(
             \sum_{k_{\ell}}
               E_{k_{\ell}}
               \,
               \rhoAxell
               \,
               E^\Herm_{k_{\ell}}
           \right)
           \, 
           M^\Herm_{y_{\ell}}
         \right) 
           \nonumber \\
    &= \sum_{k_{\ell}}
         \Tr
         \left(
           M_{y_{\ell}}
           \,
           E_{k_{\ell}}
           \,
           \rhoAxell
           \,
           E^\Herm_{k_{\ell}}
           \,
           M^\Herm_{y_{\ell}}
         \right) \ .
\end{align}
Introducing suitable orthonormal bases to express the operators, we can write
this as
\begin{align}
  &
  W(y_{\ell}|x_{\ell}) 
    \nonumber \\
    &\quad
     = \sum_{k_{\ell}}
         \sum_{a_{\ell}, \, a'_{\ell}}
           \sum_{b_{\ell}, \, b'_{\ell}}
             \sum_{c_{\ell}, \, c'_{\ell}}
               M_{y_{\ell}}(c_{\ell}, b_{\ell})
               \cdot
               E_{k_{\ell}}(b_{\ell}, a_{\ell}) 
                 \nonumber \\
    &\quad
     \hspace{1.0cm}
               \cdot
               \rhoAxell(a_{\ell},a'_{\ell})
               \cdot
               E^\Herm_{k_{\ell}}(a'_{\ell},b'_{\ell})
               \cdot
               M^\Herm_{y_{\ell}}(b'_{\ell}, c'_{\ell})
               \cdot
               \delta(c'_{\ell}, c_{\ell}) \ . 
\end{align}

These calculations can be visualized with the help of the NFG in
Fig.~\ref{fig:memoryless:quantum:channel:1}. Namely, the global function is
\begin{align}
  &
  g(x_{\ell}, y_{\ell}, k_{\ell}, 
    a_{\ell}, a'_{\ell}, b_{\ell}, b'_{\ell}, c_{\ell}, c'_{\ell})
    \nonumber \\
    &\quad
     = M_{y_{\ell}}(c_{\ell}, b_{\ell})
               \cdot
               E_{k_{\ell}}(b_{\ell}, a_{\ell}) 
               \cdot
               \rhoAxell(a_{\ell},a'_{\ell})
                 \nonumber \\
    &\quad
     \hspace{1.0cm}
               \cdot
               E^\Herm_{k_{\ell}}(a'_{\ell},b'_{\ell})
               \cdot
               M^\Herm_{y_{\ell}}(b'_{\ell}, c'_{\ell})
               \cdot
               \delta(c'_{\ell}, c_{\ell}) \ ,
\end{align}
and the above-mentioned function $W(y_{\ell}|x_{\ell})$ is obtained by a
suitably closing-the-box operation (see the red box in
Fig.~\ref{fig:memoryless:quantum:channel:1}), where we sum over all variables
associated with edges that are completely inside the box.

Finally, note that the channel law of $n$ independent instantiations of this
channel is given by
\begin{align}
  W(\vy_1^n | \vx_1^n)
    &= \prod_{\ell=1}^{n}
         W(y_{\ell} | x_{\ell}) \ .
\end{align}

\subsection{Classical Communication over a Quantum Channel with Memory}

Having discussed quantum channels without memory, we now turn our attention to
quantum channels with memory. Alice wants again to communicate some classical
information to Bob. For time indices $\ell = 1, \ldots, n$, they can use the
following quantum channel characterized by a CPTP map
\begin{alignat*}{2}
  \Phi_{\ell} : \ \ 
    && \DensOp\bigl( \HAt{\ell} \otimes \HSt{\ell-1} \bigr)
       &\to
       \DensOp\bigl( \HBt{\ell} \otimes \HSt{\ell} \bigr) \\
    && \rhoAtStp{\ell}
       &\mapsto
       \rhoBtSt{\ell}
\end{alignat*}
The following objects appear in this expression:
\begin{itemize}

\item $\HAt{\ell}$ is a Hilbert space on Alice's side.

\item $\HBt{\ell}$ is a Hilbert space on Bob's side.

\item $\HSt{\ell-1}$ is the Hilbert space relevant for the memory of the
  channel at time index $\ell-1$.

\item $\HSt{\ell}$ is the Hilbert space relevant for the memory of the channel
  at time index $\ell$.

\item $\DensOp\bigl( \HAt{\ell} \otimes \HSt{\ell-1} \bigr)$ is the set of
  density operators defined on $\HAt{\ell} \otimes \HSt{\ell-1}$.

\item $\DensOp\bigl( \HBt{\ell} \otimes \HSt{\ell} \bigr)$ is the set of
  density operators defined on $\HBt{\ell} \otimes \HSt{\ell}$.

\end{itemize}

We make the following assumptions:
\begin{itemize}

\item All Hilbert spaces are finite dimensional.

\item In order to be specific, the mapping $\Phi_{\ell}$ is defined via Kraus
  operators $\{ E_{k_{\ell}} \}_{k_{\ell}}$, \ie,
  \begin{align}
    \Phi_{\ell}\bigl( \rhoAtStp{\ell} \bigr)
      &\defeq
         \sum_{k_{\ell}}
           E_{k_{\ell}}
           \,
           \rhoAtStp{\ell}
           \,
           E^\Herm_{k_{\ell}} \ .
  \end{align}
  Note that the operators $\{ E_{k_{\ell}} \}_{k_{\ell}}$ have to satisfy the
  condition $\sum_{k_{\ell}} E^\Herm_{k_{\ell}} \, E_{k_{\ell}} = I$.

\item Alice can prepare quantum states in $\HAt{\ell}$ described by density
  operators $\{ \rhoAxell \}_{x_{\ell} \in \set{X}}$. We assume that,
  given $x_{\ell}$, what Alice does is independent of the state of the channel
  at time index $\ell-1$, \ie, $\rhoAtStp{\ell} = \rhoAxell \otimes
  \rhoSt{\ell-1}$.

\item Bob can make a quantum measurement on $\DensOp\bigl( \HBt{\ell} \bigr)$
  described by the measurement operators $\{ M_{y_{\ell}} \}_{y_{\ell} \in
    \set{Y}}$. Specifically, for $\rhoBt{\ell} \in \DensOp\bigl( \HBt{\ell}
  \bigr)$, the measurement outcome is $y_{\ell} \in \set{Y}$ with probability
  \begin{align}
    p_{Y_{\ell}}(y_{\ell})
      &= \Tr
           \Bigl( 
             M_{y_{\ell}}
             \, 
             \rhoBt{\ell}
             \, 
             M^\Herm_{y_{\ell}}
           \Bigr) \ .
  \end{align}
  Note that the operators $\{ M_{y} \}_{y_{\ell} \in \set{Y}}$ have to satisfy
  the condition $\sum_{y_{\ell}} M^\Herm_{y_{\ell}} \, M_{y_{\ell}} = I$.

\end{itemize}
For further details about quantum channels with memory we refer to the survey
papers by Kretschmann and Werner~\cite{Kretschmann:Werner:05:1} and by Caruso
\etal~\cite{Caruso:Giovannetti:Lupo:Mancini:14:1}.

\begin{figure}
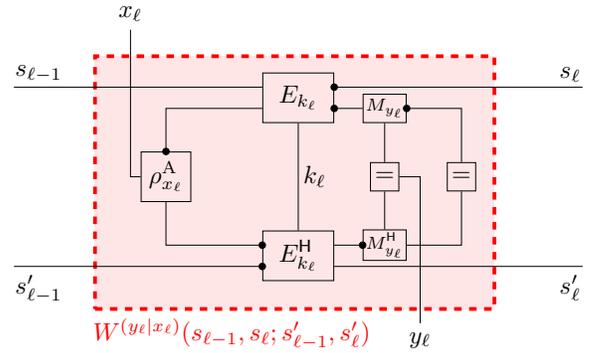

  \centering
  \figQuantumStateChannel
  \caption{Classical communication over a quantum channel with memory:
    internal details of $W^{(y_{\ell}|x_{\ell})}(s_{\ell-1}, s_{\ell};
    s'_{\ell-1}, s'_{\ell})$
    in~\eqref{eq:quantum:state:channel:W:section:1}.}
  \label{fig:quantum:state:channel:1}
\end{figure}

Alice and Bob use $n$ instantiations of this channel to transmit classical
information as follows:
\begin{itemize}

\item Alice uses a classical code to encode her information word $\vu = (u_1,
  u_2, \ldots, u_k) \in \setU^k$ into a codeword $\vx = (x_1, x_2, \ldots,
  x_n) \in \setX^n$.

\item At time instance $\ell$, Alice transmits $\rhoAt{\ell} =
  \rhoAxell$ via the $\ell$-th instantiation of the quantum channel to
  Bob.

\item Bob makes a quantum measurement on
  \begin{align}
    \rhoBt{\ell}
      &\defeq 
         \TrSt{\ell}
           \bigl( 
             \rhoBtSt{\ell}
           \bigr) 
       \defeq
         \Tr_{\mathrm{S}_{\ell}} 
          \bigl(
            \Phi_{\ell}(\rhoAtStp{\ell})
          \bigr)
  \end{align}
  described by the measurement operators $\{ M_{y_{\ell}} \}_{y_{\ell} \in
    \set{Y}}$. The measurement outcome is called $y_{\ell}$.

\item Bob decodes $\vy = (y_1, y_2, \ldots, y_n) \in \setY^n$ toward obtaining
  an estimate $\hvu$ of $\vu$.

\end{itemize}
We emphasize that in our setup, the operators $\{ \rhoA_{x_{\ell}}
\}_{x_{\ell} \in \set{X}}$ and $\{ M_{y_{\ell}} \}_{y_{\ell} \in \set{Y}}$ are
given, \ie, they cannot be chosen by Alice and Bob, respectively.

With this, the probability of receiving $y_{\ell}$, given that $x_{\ell}$ was
sent and given that the channel state at time $\ell-1$ is known to be
$\rhoSt{\ell-1}$, equals\footnote{Note that what $\rhoSt{\ell-1}$ is, depends
  on the knowledge/ignorance of components of $\vx_1^{\ell-1}$ and
  $\vy_1^{\ell-1}$. For an example, see later parts of this appendix.}
\begin{align}
  \Tr
    \left(
      M_{y_{\ell}} 
      \,
      \TrSt{\ell}
      \biggl(
        \sum_{k_{\ell}}
          E_{k_{\ell}}
          \,
          \bigl( \rhoAxell \otimes \rhoSt{\ell-1} \bigr)
          \,
          E^\Herm_{k_{\ell}}
      \biggr)
      \, 
      M^\Herm_{y_{\ell}}
    \right) ,
      \label{eq:quantum:state:channel:prob:y:1}
\end{align}
which can also be written as
\begin{align}
  \Tr
    \left(
      (M_{y_{\ell}} \! \otimes \! \ISt{\ell})
      \,
      \biggl(
        \sum_{k_{\ell}}
          E_{k_{\ell}}
          \,
          \bigl( \rhoAxell \! \otimes \! \rhoSt{\ell-1} \bigr)
          \,
          E^\Herm_{k_{\ell}}
      \biggr)
      \, 
      (M^\Herm_{y_{\ell}} \otimes \ISt{\ell})
    \right) . 
      \label{eq:quantum:state:channel:prob:y:2}
\end{align}
Moreover, assuming that $x_{\ell}$ was sent, that $y_{\ell}$ was observed, and
that the channel state at time $\ell-1$ is known to be $\rhoSt{\ell-1}$, the
channel state at time index $\ell$ is given by
\begin{align}
  \frac
  {
    \TrBt{\ell}
      \Bigl(
        (M_{y_{\ell}} \! \otimes \! \ISt{\ell})
        \,
        \left(
          \sum_{k_{\ell}}
            E_{k_{\ell}}
            \,
            \bigl( \rhoAxell \! \otimes \! \rhoSt{\ell-1} \bigr)
            \,
            E^\Herm_{k_{\ell}}
        \right)
        \, 
        (M^\Herm_{y_{\ell}} \! \otimes \! \ISt{\ell})
      \Bigr)
  }
  {
    \Tr
      \Bigl(
        (M_{y_{\ell}} \! \otimes \! \ISt{\ell})
        \,
        \left(
          \sum_{k_{\ell}}
            E_{k_{\ell}}
            \,
            \bigl( \rhoAxell \! \otimes \! \rhoSt{\ell-1} \bigr)
            \,
            E^\Herm_{k_{\ell}}
        \right)
        \, 
        (M^\Herm_{y_{\ell}} \! \otimes \! \ISt{\ell})
      \Bigr)
  } .
    \label{eq:quantum:state:channel:next:state:1}
\end{align}
Note that the denominator in~\eqref{eq:quantum:state:channel:next:state:1}
equals the expressions in~\eqref{eq:quantum:state:channel:prob:y:1}
and~\eqref{eq:quantum:state:channel:prob:y:2}.

In order to obtain an NFG representation of the setup in this section, we
introduce
\begin{align}
  &
  W^{(y_{\ell}|x_{\ell})}(s_{\ell-1},s_{\ell}; s'_{\ell-1},s'_{\ell})
    \nonumber \\
    &\quad
     \defeq
     \sum_{k_{\ell}}
         \sum_{a_{\ell}, \, a'_{\ell}}
           \sum_{b_{\ell}, \, b'_{\ell}}
             \sum_{c_{\ell}, \, c'_{\ell}}
               M_{y_{\ell}}(c_{\ell}, b_{\ell})
               \cdot
               E_{k_{\ell}}\bigl( (b_{\ell}, s_{\ell}), (a_{\ell}, s_{\ell-1}) \bigr) 
                 \nonumber \\
    &\quad
     \hspace{1.0cm}
               \cdot
               \rhoAxell(a_{\ell},a'_{\ell})
               \cdot
               E^\Herm_{k_{\ell}}
                 \bigl( (a'_{\ell}, s'_{\ell-1}), (b'_{\ell}, s'_{\ell}) \bigr)
               \cdot
               M^\Herm_{y_{\ell}}(b'_{\ell}, c'_{\ell})
                 \nonumber \\
    &\quad
     \hspace{1.0cm}
               \cdot
               \delta(c'_{\ell}, c_{\ell}) \ .
       \label{eq:quantum:state:channel:W:section:1}
\end{align}
Note that this function is obtained by a suitably closing-the-box operation
(see the red box in Fig.~\ref{fig:quantum:state:channel:1}), where we sum over
all variables associated with edges that are completely inside the box. (Note
that the edge labels inside the box are analogous to the edge labels in
Fig.~\ref{fig:memoryless:quantum:channel:1}. However, for simplicity, we have
omitted most of them.)

On the side, we note that the corresponding NFG for the Quantum
Gilbert--Elliott Channel in Fig.~\ref{fig:quantum:GE:channel:1} contains
additional $U$-boxes representing a unitary evolution of the channel state. By
redefining $E_{k_{\ell}}$ to $(I^{\mathrm{B}_{\ell}} \otimes U) \,
E_{k_{\ell}}$, the NFG in Fig.~\ref{fig:quantum:GE:channel:1} could be brought
into the form of the NFG in
Fig.~\ref{fig:quantum:state:channel:1}.\footnote{Observe that the redefined
  $E_{k_{\ell}}$ still satisfies $\sum_{k_{\ell}} E^\Herm_{k_{\ell}} \,
  E_{k_{\ell}} = I$.}

With the function $W^{(y_{\ell}|x_{\ell})}(s_{\ell-1},s_{\ell};
s'_{\ell-1},s'_{\ell})$ in hand, along with the corresponding (partial) NFG in
Fig.~\ref{fig:quantum:state:channel:1}, we can define an NFG for the overall
setup as in Fig.~\ref{fig:QFMSC:high:level:1}. Note that the boxes labeled $W$
represent the function $W^{(y_{\ell}|x_{\ell})}(s_{\ell-1},s_{\ell};
s'_{\ell-1},s'_{\ell})$ for $\ell = 1, \ldots, n$. Because the function
$W^{(y_{\ell}|x_{\ell})}(s_{\ell-1},s_{\ell}; s'_{\ell-1},s'_{\ell})$
satisfies ~\eqref{eq:quantum:channel:nfg:property:11}
and~\eqref{eq:quantum:channel:nfg:property:12}, the global function $g$ of the
NFG in Fig.~\ref{fig:QFMSC:high:level:1}
satisfies~\eqref{eq:quantum:channel:nfg:property:1}--\eqref{eq:quantum:channel:nfg:property:5}. (See
Appendix~\ref{sec:app:2} for details.)

All probabilities and density operators of interest can be obtained by
suitably summing over variables of the global function of the NFG in
Fig.~\ref{fig:QFMSC:high:level:1}. For example, for fixed $\vy_1^{\ell-1} =
\cvy_1^{\ell-1}$ and $x_{\ell} = \cx_{\ell}$, the probability
$p_{Y_{\ell}|X_{\ell},\vY_1^{\ell-1}} (y_{\ell} | \cx_{\ell},
\cvy_1^{\ell-1})$ can be obtained as follows:
\begin{align}
  &
  p_{Y_{\ell}|X_{\ell},\vY_1^{\ell-1}}(y_{\ell}|\cx_{\ell}, \cvy_1^{\ell-1})
    \nonumber \\
    &\quad\quad
     \propto
       p_{X_{\ell}, \vY_1^{\ell-1}, Y_{\ell}}
         (\cx_{\ell}, \cvy_1^{\ell-1}, y_{\ell})
           \nonumber \\
    &\quad\quad
     = \sum_{\vx_1^{\ell-1}, \vx_{\ell+1}^n, \, \vy_{\ell+1}^n, \, \vs_0^n, \, {\vs'}_0^n}
         \!\!\!\!\!\!\!\!\!\!\!\!\!
         g(\vx_1^{\ell-1}, \cx_{\ell}, \vx_{\ell+1}^n, 
           \cvy_1^{\ell-1}, \vy_{\ell}^n,
           \vs_0^n, {\vs'}_0^n) \ .
             \label{eq:quantum:state:channel:marginal:example:1}
\end{align}
Here, the proportionality constant is chosen such that the left-hand side is a
valid conditional pmf. The computation of this function via closing-the-box
operations is visualized in
Fig.~\ref{fig:CFSM:channel:simulation:Y}. Some comments:
\begin{itemize}

\item Applying the closing-the-box operation to the magenta box results in the
  function
  $\sigmaSt{\ell-1}_{|\cvy_{1}^{\ell-1}}(s_{\ell-1},s'_{\ell-1})$.\footnote{The
  subscript ``$|\cvy_{1}^{\ell-1}$'' emphasizes that
  $\sigmaSt{\ell-1}_{|\cvy_{1}^{\ell-1}}(s_{\ell-1},s'_{\ell-1})$ is based on
  the knowledge of $\vy_1^{\ell-1} = \cvy_1^{\ell-1}$.}

\item Applying the closing-the-box operation to the green box results in the
  function $\delta(s'_{\ell},s_{\ell})$, \ie, a degree-$2$ equality function
  node.

\item Applying the closing-the-box operation to the yellow box results in the
  function $p_{X_{\ell}, \vY_1^{\ell-1}, Y_{\ell}} (\cx_{\ell},
  \cvy_1^{\ell-1}, y_{\ell})$, from which the desired function
  $p_{Y_{\ell}|X_{\ell},\vY_1^{\ell-1}}(y_{\ell}|\cx_{\ell}, \cvy_1^{\ell-1})$
  can be easily obtained by normalization. 

  Note that, mathematically, applying the closing-the-box operation to the
  yellow box gives the following function (with argument $y_{\ell}$)
  \begin{align}
    &
    \sum_{s_{\ell-1}, \, s'_{\ell-1}, \, s_{\ell}, \, s'_{\ell}}
      \sigmaSt{\ell-1}_{|\cvy_{1}^{\ell-1}}(s_{\ell-1},s'_{\ell-1}) \cdot
      p_X(\cx_{\ell})
        \nonumber \\
    &\quad\quad\quad\quad
      \cdot
      W^{(y_{\ell}|\cx_{\ell})}(s_{\ell-1},s_{\ell}; s'_{\ell-1},s'_{\ell})
      \cdot
      \delta(s'_{\ell}, s_{\ell}) \ .
        \label{eq:quantum:state:channel:some:prob:1}
  \end{align}

\item If $\rhoSt{\ell-1} = \sigmaSt{\ell-1}_{|\cvy_{1}^{\ell-1}}$, then the
  expressions in~\eqref{eq:quantum:state:channel:prob:y:1}
  and~\eqref{eq:quantum:state:channel:prob:y:2} equal
  $p_{Y_{\ell}|X_{\ell},\vY_1^{\ell-1}} (y_{\ell} | \cx_{\ell},
  \cvy_1^{\ell-1})$. This connection between the NFG approach and the standard
  quantum information processing notation can be established by
  inserting~\eqref{eq:quantum:state:channel:W:section:1}
  into~\eqref{eq:quantum:state:channel:some:prob:1}.

\item Functions like $\sigmaSt{\ell-1}_{|\cvy_{1}^{\ell-1}}$ can be computed
  efficiently by recursive computations. For more details, see
  Appendix~\ref{sec:app:4}.

\item The analogous NFG for the classical setup is shown in
  Fig.~\ref{fig:QFSM:channel:simulation:Y}.

\end{itemize}
Let us also point out that, very often, the desired functions and quantities
are based on the same partial results. The NFG framework is very helpful to
visualize these partial results and to show how these partial results are
combined to obtain the desired functions and quantities.

\section{Supplementary Notes for 
               Section~\ref{sec:information:rate:estimation:1}}
\label{sec:app:4}

In this appendix, we first present a brief summary of the approach for
estimating the information rate of FSMCs as developed
in~\cite{Arnold:Loeliger:Vontobel:Kavcic:Zeng:06:1}, and then extend this
method to quantum channels with memory.

\subsection{Estimation of $I(\sX;\sY)$ for Classical Channels with Memory}
\label{sec:estimation:information:rate:classical:1}

We make the following assumptions.
\begin{itemize}

\item As already mentioned, the derivations in this paper are for the case
  where the input process $\sX = (X_1, X_2, \ldots)$ is an i.i.d.\
  process. The results can be generalized to other stationary ergodic input
  processes that can be represented by a finite-state-machine source
  (FSMS). Technically, this is done by defining a new state that combines the
  FSMS state and the FSMC state.

\item We assume that the FSMC is indecomposable, which roughly means that in
  the long term the behavior of the channel is independent of the initial
  channel state distribution $p_{\tilde S_0}$ (see~\cite{Gallager:68} for the
  exact definition). For such channels and stationary ergodic input processes,
  the information rate $I(\sX;\sY)$ is well defined.

\end{itemize}

\begin{definition}
  The information rate is defined to be
  \begin{align}
    I(\sX; \sY)
      &\defeq 
         \lim_{n \to \infty}
           \frac{1}{n} I(X_1, \ldots, X_n; Y_1, \ldots, Y_n) \ .
             \label{eq:def:fsmc:ir:1}
  \end{align}
  Equivalently, it can be defined as
  \begin{align}
    I(\sX; \sY)
      &= H(\sX) + H(\sY) - H(\sXY) \ ,
          \label{eq:def:fsmc:ir:2}
  \end{align}
  where
  \begin{align*}
    H(\sX)
      &= \lim_{n \to \infty}
           \frac{1}{n} H(\vX_1^n) \ , \\
    H(\sY)
      &= \lim_{n \to \infty}
           \frac{1}{n} H(\vY_1^n) \ , \\
    H(\sXY)
      &= \lim_{n \to \infty}
           \frac{1}{n} H(\vX_1^n, \vY_1^n) \ .
  \end{align*}
\end{definition}

We proceed as in~\cite{Arnold:Loeliger:Vontobel:Kavcic:Zeng:06:1}. (For more
background information, see the references
in~\cite{Arnold:Loeliger:Vontobel:Kavcic:Zeng:06:1}, in
particular~\cite{Ephraim:Merhav:02:1}.) Namely, because
of~\eqref{eq:def:fsmc:ir:2} and because
\begin{alignat}{3}
  &&-
    \frac{1}{n}\log{p(\vX_1^n)}
   &\rightarrow
      H(\sX) \quad &&\text{w.p.\ $1$} \ ,
        \label{eq:converge:x:1} \\
  &&-
    \frac{1}{n}\log{p(\vY_1^n)}
   &\rightarrow
      H(\sY) \quad &&\text{w.p.\ $1$} \ ,
        \label{eq:converge:y:1} \\
  &&-
    \frac{1}{n}\log{p(\vX_1^n, \vY_1^n)}
   &\rightarrow
      H(\sXY) \quad &&\text{w.p.\ $1$} \ ,
        \label{eq:converge:xy:1}
\end{alignat}
we can choose some finite positive integer $n$ and approximate $I(\sX; \sY)$
as follows
\begin{align}
  I(\sX; \sY)
    &\approx
       \hat I(\sX; \sY) \ ,
\end{align}
where
\begin{align}
  \hat I(\sX; \sY)
    &\defeq
       -
       \frac{1}{n}
         \log{p(\cvx_1^n)}
       -
       \frac{1}{n}
         \log{p(\cvy_1^n)}
       +
       \frac{1}{n}
         \log{p(\cvx_1^n,\cvy_1^n)}
           \label{eq:classical:channel:information:rate:estimate:1}
\end{align}
and where $\cvx_1^n$ and $\cvy_1^n$ are some input and output sequences,
respectively, randomly generated according to
\begin{align}
  \label{eq:fsmc:joint:xy:distribution}
  p_{\vX_1^n,\vY_1^n}(\vx_1^n,\vy_1^n) 
    &= \sum_{\tvs_0^n} 
         p_{\tilde S_0}(\tilde s_0)
         \cdot
         Q(\vx_1^n) 
         \cdot 
         W(\vy_1^n, \tvs_1^n | \vx_1^n, \tilde s_0) \ .
\end{align}
Note that $\cvx$ can be obtained by simulating the input process and $\cvy$
can be obtained by simulating the channel for the given input process
realization $\cvx$.

We continue by showing how the three terms appearing on the right-hand side
of~\eqref{eq:classical:channel:information:rate:estimate:1} can be computed
efficiently. We show it explicitly for the second term, and then outline it
for the first and the third term.

In order to efficiently compute the second term on the right-hand side
of~\eqref{eq:classical:channel:information:rate:estimate:1}, \ie,
$-\frac{1}{n}\log{p(\vY_1^n)}$, we consider the \emph{state metric} defined
in~\cite{Arnold:Loeliger:Vontobel:Kavcic:Zeng:06:1} as
\begin{align}
  \muY_{\ell}(\tilde{s}_{\ell})
    &\defeq
       \sum_{\vx_1^{\ell}} 
         \sum_{\tvs_0^{\ell-1}}       
           p_{\tilde S_0}(\tilde s_0)
           \cdot
           Q(\vx_1^{\ell}) 
           \cdot 
           W(\cvy_1^{\ell}, \tvs_1^{\ell} | \vx_1^{\ell}, \tilde s_0) \ .
             \label{eq:def:classical:channel:state:metric:Y:1}
\end{align}
Note that 
\begin{align}
  \label{eq:calculate:p:y:1}
  p_{\vY_1^n}(\cvy_1^n) 
    &= \sum_{\tilde{s}_n} 
         \muY_n(\tilde{s}_n)
\end{align}
and that $\muY_{\ell}(\tilde{s}_{\ell})$ can be calculated recursively via
\begin{align}
  &
  \muY_{\ell}(\tilde{s}_{\ell})
    \nonumber \\
    &\quad
     = \sum_{x_{\ell}}
         \sum_{\tilde{s}_{\ell-1}}
           \muY_{\ell-1}(\tilde{s}_{\ell-1})
           \cdot 
           Q(x_{\ell} | \vx_1^{\ell-1})
           \cdot
           W(\tilde{s}_{\ell}, \cy_{\ell} | \tilde{s}_{\ell-1}, x_{\ell})
             \nonumber \\
    &\quad
     = \sum_{x_{\ell}}
         \sum_{\tilde{s}_{\ell-1}}
           \muY_{\ell-1}(\tilde{s}_{\ell-1})
           \cdot 
           p_X(x_{\ell})
           \cdot
           W(\tilde{s}_{\ell}, \cy_{\ell}| \tilde{s}_{\ell-1}, x_{\ell}) \ .
  \label{eq:recursive:state:metric:Y:1}
\end{align}
These definitions are visualized in Fig.~\ref{fig:CFSM:estimate:hY} by
applying suitable closing-the-box operations to the NFG in
Fig.~\ref{fig:FMSC:high:level:1}.

\bigformulatop{53}{%

  \begin{itemize}
  
  \item Replace~\eqref{eq:fsmc:joint:xy:distribution} by
    \begin{align}
      p_{\vX_1^n,\vY_1^n}(\vx_1^n,\vy_1^n)
        &= \sum_{\vs_0^n, \, {\vs'}_0^n} 
             \rhoSt{0}(s_0, s'_0)
             \cdot
             Q(\vx_1^n)
             \cdot
             \prod_{\ell=1}^{n} 
               \left(
                 W^{(y_{\ell}|x_{\ell})}(s_{\ell-1}, s_{\ell}; s'_{\ell-1}, s'_{\ell})
               \right)
             \cdot 
             \delta(s'_n,s_n) \ .
               \label{eq:quantum:fsmc:joint:xy:distribution}
    \end{align}
    
  \item Replace the state metric $\muY_{\ell}$
    in~\eqref{eq:def:classical:channel:state:metric:Y:1} by the \emph{state
      operator} $\sigmaY_{\ell}$, where
    \begin{align}
      \sigmaY_{\ell}(s_{\ell}, s'_{\ell})
        &\defeq
           \sum_{\vx_1^{\ell}, \, \vs_0^{\ell-1}, \, {\vs'}_0^{\ell-1}}
             \rhoSt{0}(s_0, s'_0)
             \cdot
             Q(\vx_1^{\ell})
             \cdot \,
             \prod_{h=1}^{\ell} 
               W^{(\check y_{h}|x_{h})}(s_{h-1}, s_{h}; s'_{h-1}, s'_{h}) \ .
    \end{align}
  
  \item Replace~\eqref{eq:calculate:p:y:1} by
    \begin{align}
      p_{\vY_1^n}(\vy_1^n)
        &= \sum_{s_n, \, s'_n}
             \sigmaY_n(s_n, s'_n)
             \cdot
             \delta(s'_n,s_n) \ .
    \end{align}
  
  \item Replace~\eqref{eq:recursive:state:metric:Y:1} by
    \begin{align}
      \sigmaY_{\ell}(s_{\ell}, s'_{\ell})
        &\defeq
           \sum_{x_{\ell}}
             \sum_{s_{\ell-1}, \, s'_{\ell-1}}
               \sigmaY_{\ell-1}(s_{\ell-1}, s'_{\ell-1})
               \cdot
               p_X(x_{\ell})
               \cdot
               W^{(\check y_{\ell}|x_{\ell})}(s_{\ell-1}, s_{\ell}; s'_{\ell-1}, s'_{\ell})
                 \ .
                   \label{eq:recursive:quantum:state:metric:Y:1}
    \end{align}
  
  \item Replace~\eqref{eq:recursive:state:metric:Y:2} by
    \begin{align}
      \bsigmaY_{\ell}(s_{\ell}, s'_{\ell})
        &\defeq
           \lambdaY_{\ell}
           \cdot
             \sum_{x_{\ell}}
             \sum_{s_{\ell-1}, \, s'_{\ell-1}}
               \sigmaY_{\ell-1}(s_{\ell-1}, s'_{\ell-1})
               \cdot
               p_X(x_{\ell})
               \cdot
               W^{(\check y_{\ell}|x_{\ell})}(s_{\ell-1}, s_{\ell}; s'_{\ell-1}, s'_{\ell})
                 \ ,
    \end{align}
    where the scaling factor $\lambdaY_{\ell} > 0$ is defined such that
    $\sum_{s_{\ell}, \, s'_{\ell}} \sigmaY_{\ell}(s_{\ell}, s'_{\ell}) \cdot
    \delta(s'_n,s_n) = 1$, \ie, $\tr(\bsigmaY_{\ell}) = 1$.
  
  \item Replace the state metric $\muXY_{\ell}$
    in~\eqref{eq:def:classical:channel:state:metric:XY:1} by the \emph{state
      operator} $\sigmaXY_{\ell}$, where
    \begin{align}
      \sigmaXY_{\ell}(s_{\ell}, s'_{\ell})
        &\defeq
           \sum_{\vs_0^{\ell-1}, \, {\vs'}_0^{\ell-1}}
             \rhoSt{0}(s_0, s'_0)
             \cdot
             Q(\cvx_1^{\ell})
             \cdot
             \prod_{h=1}^{\ell} 
               W^{(\check y_{h}|\check x_{h})}
                 (s_{h-1}, s_{h}; s'_{h-1}, s'_{h}) \ . 
               \label{eq:def:quantum:channel:state:metric:XY:1}
    \end{align}

  \end{itemize}

}

However, since the value of $\muY_{\ell}(\tilde{s}_{\ell})$ tends to zero as
$\ell$ grows, such recursive calculations are numerically unstable. A solution
is to \emph{normalize} $\muY_{\ell}(\tilde s_{\ell})$ during such recursive
calculations and to keep track of the scaling coefficients. Namely,
\begin{align}
  \bmuY_{\ell}(\tilde{s}_{\ell}) 
    &\defeq
       \lambdaY_{\ell}
       \cdot
       \sum_{x_{\ell}}
         \sum_{\tilde{s}_{\ell-1}}
           \bmuY_{\ell-1}(\tilde{s}_{\ell})
           \cdot 
           p_X(x_{\ell})
           \cdot
           W(\tilde{s}_{\ell}, \cy_{\ell}| \tilde{s}_{\ell-1}, x_{\ell}),
             \label{eq:recursive:state:metric:Y:2}
\end{align}
where the scaling factor $\lambdaY_{\ell} > 0$ is defined such that
$\sum_{\tilde{s}_{\ell}} \bmuY_{\ell}(\tilde{s}_{\ell}) = 1$. With this,
Eq.~\eqref{eq:calculate:p:y:1} can be rewritten as
\begin{align}
  \label{eq:calculate:p:y:2}
  p_{\vY_1^n}(\cvy_1^n) 
    &= \prod_{\ell=1}^n 
         (\lambdaY_{\ell})^{-1} \ .
\end{align}
Finally, we arrive at the following efficient procedure for computing
$-\frac{1}{n}\log{p(\cvy_1^n)}$:
\begin{itemize}

\item For $\ell = 1, \ldots, n$, iteratively compute the normalized state
  metric and with that the scaling factors $\lambdaY_{\ell}$.

\item Conclude with the result
  \begin{align}
    \label{eq:calculate:p:y:3}
    -
    \frac{1}{n}
      \log
        p_{\vY_1^n}(\cvy_1^n)
      &= \frac{1}{n}
         \sum_{\ell=1}^n 
           \log(\lambdaY_{\ell}) \ .
  \end{align}

\end{itemize}

The third term on the right-hand side
of~\eqref{eq:classical:channel:information:rate:estimate:1} can be evaluated
by an analogous procedure, where the state metric $\muY_{\ell}(\tilde
s_{\ell})$ is replaced by the state metric
\begin{align}
  \muXY_{\ell}(\tilde{s}_{\ell}) 
    &\defeq
       \sum_{\tvs_0^{\ell-1}} 
         p_{\tilde S_0}(\tilde s_0)
         \cdot
         Q(\cvx_1^{\ell}) 
         \cdot 
         W(\cvy_1^{\ell}, \tvs_1^{\ell} | \cvx_1^{\ell}) \ .
           \label{eq:def:classical:channel:state:metric:XY:1}
\end{align}
The iterative calculation of $\muXY_{\ell}(\tilde{s}_{\ell})$ is visualized in
Fig.~\ref{fig:CFSM:estimate:hXY} by applying suitable closing-the-box
operations to the NFG in Fig.~\ref{fig:FMSC:high:level:1}.

Finally, the first term on the right-hand side
of~\eqref{eq:classical:channel:information:rate:estimate:1} can be trivially
evaluated if $\sX$ is an i.i.d.\ process, and with a similar approach as above
if it is described by a finite-state process.

\subsection{Estimation of $I(\sX; \sY)$ for Quantum Channels with Memory}
\label{sec:estimation:information:rate:quantum:1}

The development in this section is very similar to the development in
Section~\ref{sec:estimation:information:rate:classical:1}. This similarity
stems from the similarity of Figs.~\ref{fig:FMSC:high:level:1}
and~\ref{fig:QFMSC:high:level:1}, and highlights one of the benefits of the
factor-graph approach that we take to estimate information rates of
quantum channels with memory.

We make the following assumptions.
\begin{itemize}

\item As already mentioned, the derivations in this paper are for the case
  where the input process $\sX = (X_1, X_2, \ldots)$ is an i.i.d.\
  process. The results can be generalized to other stationary ergodic input
  processes that can be represented by a finite-state-machine source
  (FSMS). Technically, this is done by defining a new state that combines the
  FSMS state and the channel state.

\item We assume that the quantum channel with memory is
  indecomposable/forgetful, which roughly means that in the long term the
  behavior of the channel is independent of $\rhoSt{0}$ (see
  \cite{Kretschmann:Werner:05:1, Caruso:Giovannetti:Lupo:Mancini:14:1} for
  more details).

\end{itemize}

The changes that are necessary compared to
Appendix~\ref{sec:estimation:information:rate:classical:1} in order to
estimate $I(\sX; \sY)$, are shown in
Eqs.~\eqref{eq:quantum:fsmc:joint:xy:distribution}--\eqref{eq:def:quantum:channel:state:metric:XY:1}
at the top of this page. The corresponding calculations are visualized in
Figs.~\ref{fig:QFSM:estimate:hY} and~\ref{fig:QFSM:estimate:hXY}.

\setcounter{equation}{59}

\section{Supplementary Notes for 
               Section~\ref{sec:numerical:examples:1}}
\label{sec:app:5}

In this appendix we comment on
Figs.~\ref{fig:QGEC:plot:1}--\ref{fig:QGEC:plot:4}. We start by commenting on
the estimated information rate curves.
\begin{itemize}

\item Fig.~\ref{fig:QGEC:plot:1}: as is to be expected, the estimated
  information rate decreases for increasing $\pbad$ in the range $0 \leq \pbad
  \leq 1/2$. This behavior continues for increasing $\pbad$ in the range $1/2
  \leq \pbad \leq 1$. We conclude from this that the receiver has problems
  tracking the state for large $\pbad$.

\item Fig.~\ref{fig:QGEC:plot:2}: as is to be expected, the estimated
  information rate decreases for increasing $\pbad$ in the range $0 \leq \pbad
  \leq 1/2$. This behavior continues only partly for increasing $\pbad$ in the
  range $1/2 \leq \pbad \leq 1$. We conclude from this that when $\pbad$
  approaches $1$, the capabilities of the receiver to track the state improve
  again.

\item Fig.~\ref{fig:QGEC:plot:3}: the larger $\alpha$ is in magnitude, the
  faster the channel state changes, thereby making it often more difficult for
  the receiver to track the state. (Note that the estimated information rate
  is not plotted for $\alpha$ of small magnitude because the channel is only
  slowly mixing for such $\alpha$.)

\item Fig.~\ref{fig:QGEC:plot:4}: similar comments apply here as for
  Fig.~\ref{fig:QGEC:plot:3}.

\end{itemize}

The estimated information rate lower bounds based on mismatched decoders
nicely show the trade-off between computational complexity at the receiver
side and achievable information rates. Interestingly, for some cases the
(classical) $4$-state-auxiliary-channel-based lower bound is rather close to
the estimate information rate.

On the side, note that the estimation of the information rate lower bound
based on a classical auxiliary channel with memory needs only typical input
and output sequences $\cvx_1^n$ and $\cvy_1^n$ of the quantum channel with
memory. The calculations are then done on an NFG representation of the
classical auxiliary channel with memory
(see~\cite{Arnold:Loeliger:Vontobel:Kavcic:Zeng:06:1,
  Sadeghi:Vontobel:Shams:09:1} for details). This is particularly interesting
for scenarios where the simulation of the quantum channel with memory is too
complicated on a classical computer, yet a physical realization of the quantum
channel with memory is available.

Finally, let us point out that, from a practical point of view, we think that
mismatched decoders based on classical auxiliary channels with memory will be
even more important for quantum channels with memory than for classical
channels with memory.

%
%

%
%
%
%
%
%
%
%
%

%

%
%
%
%
%
%
%
%
%
%
%
%
%
%
%
%
%
%
%
%
%
%
%
%
%
%
%
%
%
%
%
%
%
%
%
%
%
%
%
%
%
%
%
%
%
%
%
%
%
%
%
%
%
%
%
%
%
%
%
%
%
%
%
%
%
%
%
%
%
%
%
%
%
%
%
%
%
%
%
%
%
%
%
%
%
%
%
%
%
%
%
%
%
%
%
%
%
%
%
%
%
%
%
%
%
%
%
%
%
%
%
%
%
%
%
%
%
%
%
%
%
%
%
%
%
%
%
%
%
%
%
%
%
%
%
%
%
%
%
%
%
%
%
%
%
%
%
%
%
%
%
%
%
%
%
%
%
%
%
%
%
%
%
%
%
%
%
%
%
%
%
%
%
%
%
%
%
%
%
%
%
%
%
%
%
%
%
%
%
%
%
%
%
%
%
%
%
%
%
%
%
%
%
%
%
%
%
%
%
%
%
%
%
%
%
%
%
%
%
%
%
%
%
%
%
%
%
%
%
%
%
%
%
%
%
%
%
%
%
%
%
%
%
%
%
%
%
%
%
%
%
%
%
%
%
%
%
%
%
%
%
%
%
%
%
%
%
%
%
%
%
%
%
%
%
%
%
%
%
%
%
%
%
%
%
%
%
%
%
%
%
%
%
%
%
%
%
%
%
%
%
%
%
%
%
%
%
%
%
%
%
%
%
%
%
%
%
%
%
%
%
%
%
%
%
%
%
%
%
%
%
%
%
%
%
%
%
%
%
%
%
%
%
%
%
%
%
%
%
%
%
%
%
%
%
%
%
%
%
%
%
%
%
%
%
%
%
%
%
%
%
%
%
%
%
%
%
%
%
%
%
%
%
%
%
%
%
%
%
%
%
%
%
%
%
%
%
%
%
%
%
%
%
%
%
%
%
%
%
%
%
%
%
%
%
%
%
%
%
%
%
%
%
%
%
%
%
%
%
%
%
%
%
%
%
%
%
%
%
%
%
%
%
%
%
%
%
%
%
%
%
%
%
%
%
%
%
%
%
%
%
%
%
%
%
%
%
%
%
%
%
%
%
%
%
%
%
%
%
%
%
%
%
%
%
%
%
%
%
%
%
%
%
%
%
%
%
%
%
%
%
%
%
%
%
%
%
%
%
%
%
%
%
%
%
%
%
%
%
%
%
%
%
%
%
%
%
%
%
%
%
%
%
%
%
%

%
%

%
%
%
%
%
%
%
%
%
%
%
%
%
%
%
%
%
%
%
%
%
%
%
%
%
%
%
%
%
%
%
%
%
%
%
%
%
%
%
%
%
%
%
%
%
%
%
%
%
%
%
%
%
%
%
%
%
%
%
%
%
%
%
%
%
%
%
%
%
%
%
%
%
%
%
%
%
%
%
%
%
%
%
%
%
%
%
%
%
%
%
%
%
%
%
%
%
%
%
%
%
%
%
%
%
%
%
%
%
%
%
%
%
%
%
%
%
%
%
%
%
%
%
%
%
%
%
%
%
%
%
%
%
%
%
%
%
%
%
%
%
%
%
%
%
%
%
%
%
%
%
%
%
%
%
%
%
%
%
%
%
%
%
%
%
%
%
%
%
%
%
%
%
%
%
%
%
%
%
%
%
%
%
%
%
%
%
%
%
%
%
%
%
%
%
%
%
%
%
%
%
%
%
%
%
%
%
%
%
%
%
%
%
%
%
%
%
%
%
%
%
%
%
%
%
%
%
%
%
%
%
%
%
%
%
%
%
%
%
%
%
%
%
%
%
%
%
%
%
%
%
%
%
%
%
%
%
%
%
%
%
%
%
%
%
%
%
%
%
%
%
%
%
%
%
%
%
%
%
%
%
%
%
%
%
%
%
%
%
%
%
%
%
%
%
%
%
%
%
%
%
%
%
%
%
%
%
%
%
%
%
%
%
%
%
%
%
%
%
%
%
%
%
%
%
%
%
%
%
%


%
%
%

%
%
%

\clearpage

\begin{figure*}
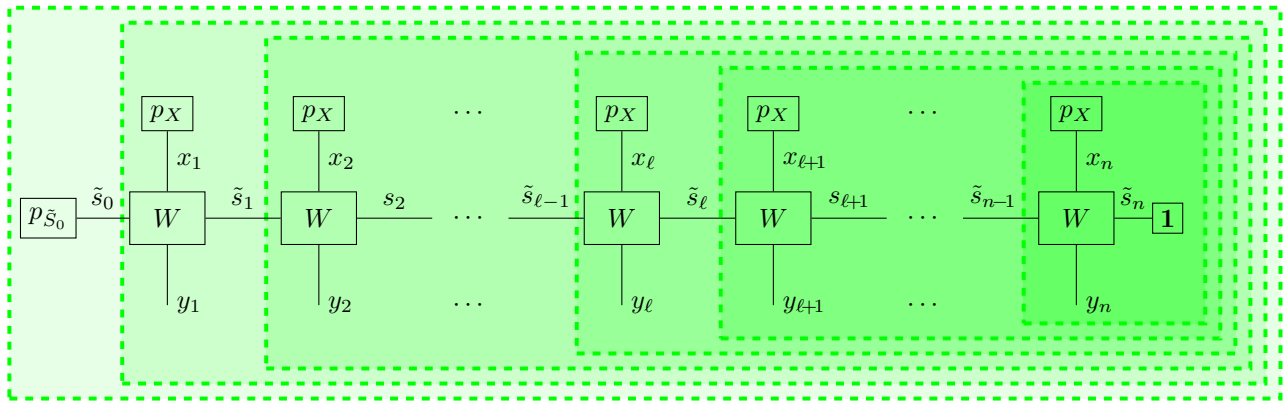

  \centering
  \figCFSMclosingthebox
  \caption{Visualization
    of~\eqref{fig:classical:channel:global:function:sum:1}. Note that every
    closing-the-box operation yields a function node representing the constant
    function $1$.}
  \label{fig:CFSM:closing:the:box}
\end{figure*}

\begin{figure*}
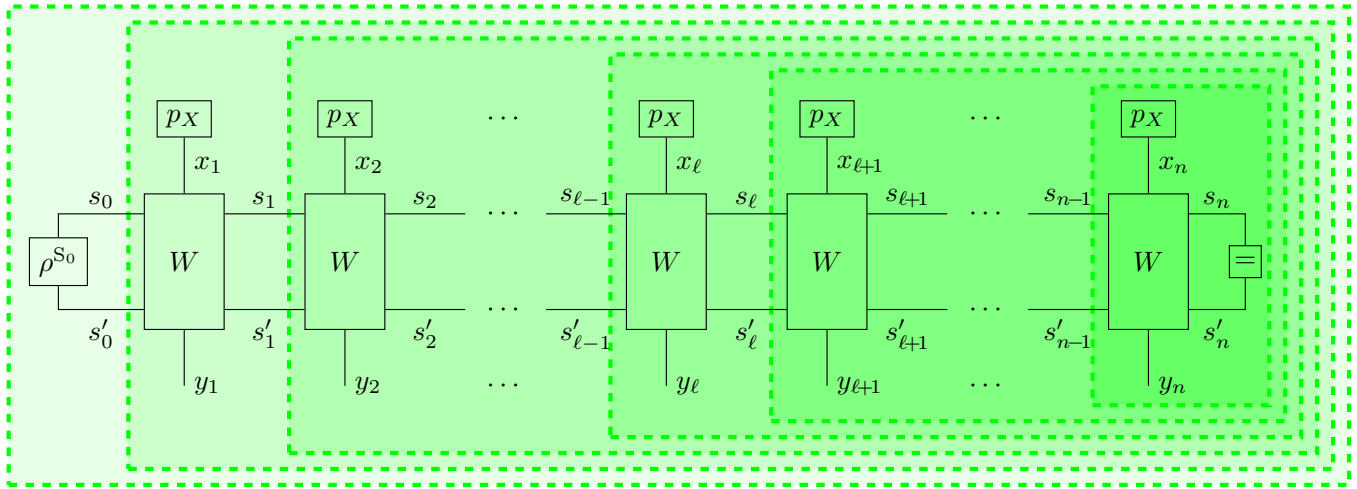

  \centering
  \figQFSMclosingthebox
  \caption{Visualization of~\eqref{eq:quantum:channel:g:sum:property:1}. Note
    that every closing-the-box operation yields a function node representing a
    Kronecker-delta function node, i.e., a degree-two equality function node.}
  \label{fig:QFSM:closing:the:box}
\end{figure*}

\clearpage

\begin{figure*}
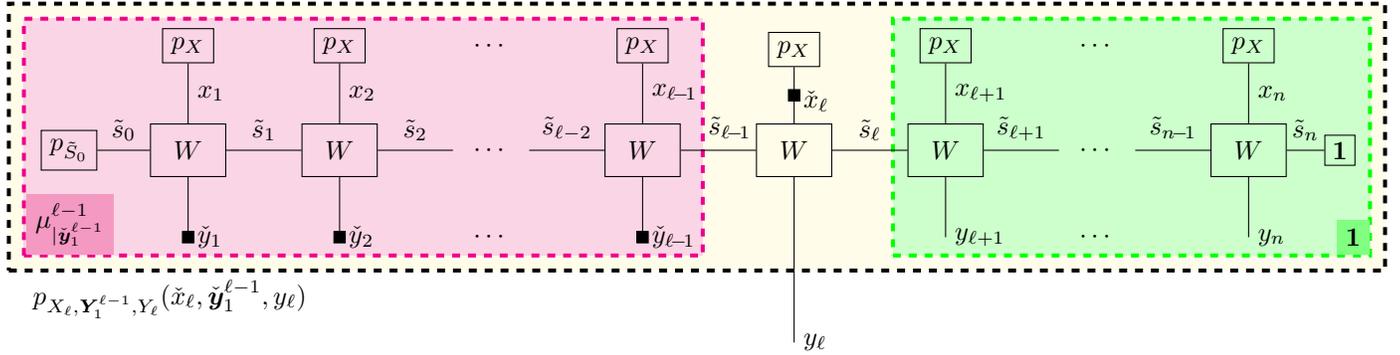

  \centering
  \figCFSMchannelsimulationY
  \caption{Classical-channel-with-memory analog of the NFG in
    Fig.~\ref{fig:CFSM:channel:simulation:Y}.}
\label{fig:QFSM:channel:simulation:Y}
\end{figure*}

\begin{figure*}
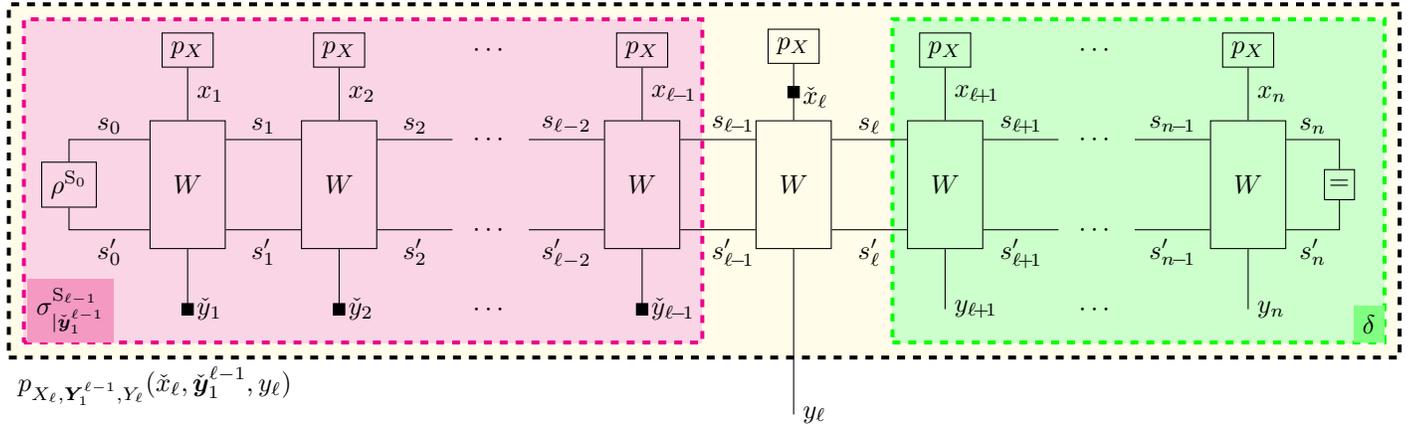

  \centering
  \figQFSMchannelsimulationY
  \caption{Visualization of the computations
    in~\eqref{eq:quantum:state:channel:marginal:example:1} via suitable
    closing-the-box operations. Note that applying the closing-the-box
    operation to the magenta box results in the function
    $\sigmaSt{\ell-1}_{|\cvy_{1}^{\ell-1}}(s_{\ell-1},s'_{\ell-1})$, whereas
    applying the closing-the-box operation to the green box results in a
    Kronecker-delta function, \ie, a degree-$2$ equality function node.}
  \label{fig:CFSM:channel:simulation:Y}
\end{figure*}

\clearpage

\begin{figure*}
  \centering
  \figCFSMestimatehY
  \caption{The iterative computation of $\muY_{\ell}$ as described
    in~\eqref{eq:recursive:state:metric:Y:1} can be understood as a sequence
    of closing-the-box operations as shown above.}
  \label{fig:CFSM:estimate:hY}
\end{figure*}

\begin{figure*}
  \centering
  \figQFSMestimatehY
  \caption{The iterative computation of $\sigmaY_{\ell}$ as described
    in~\eqref{eq:recursive:quantum:state:metric:Y:1} can be understood as a
    sequence of closing-the-box operations as shown above.}
  \label{fig:QFSM:estimate:hY}
\end{figure*}

\clearpage

\begin{figure*}
  \centering
  \figCFSMestimatehXY
  \caption{The iterative computation of $\muXY_{\ell}$ can be understood as a 
    sequence of closing-the-box operations as shown above.}
  \label{fig:CFSM:estimate:hXY}
\end{figure*}

\begin{figure*}
  \centering
  \figQFSMestimatehXY
  \caption{The iterative computation of $\sigmaXY_{\ell}$ as described
    in~\eqref{eq:def:quantum:channel:state:metric:XY:1} can be understood as a
    sequence of closing-the-box operations as shown above.}
  \label{fig:QFSM:estimate:hXY}
\end{figure*}

\fi

\end{document}